\documentclass[11pt]{article}
\usepackage{hyperref}

\usepackage[round]{natbib}
\usepackage{authblk}
  \usepackage{pdfpages}
  \usepackage{amsmath,amssymb,amsfonts,mathrsfs,amsthm}
\usepackage{hyperref}

  \usepackage{pdfpages}
  \usepackage{amsmath,amssymb,amsfonts,mathrsfs,amsthm}

\usepackage[round]{natbib}
\usepackage{authblk}

\usepackage{bbm}

\usepackage{epsf}
\usepackage{epsfig}
\usepackage{setspace}
\usepackage{multirow}
\usepackage{subfigure}
\usepackage{comment}
\usepackage{gensymb}
\usepackage{geometry}

\usepackage{booktabs}
\usepackage{tabularx}
\usepackage{array}
\usepackage{xcolor}

\newcolumntype{Y}{>{\raggedright\arraybackslash}X}
\usepackage{tikz}
\usetikzlibrary{
  arrows.meta,
  positioning,
  fit,
  calc
}

\newtheorem{definition}{Definition}
\newtheorem{proposition}{Proposition}
\newtheorem{theorem}{Theorem}
\newtheorem{lemma}{Lemma}

\newtheorem{remark}{Remark}

\title{Competing Auctions in Intermediated Markets} 
\author[1]{Bruno Mazorra}
\author[2]{Minghao Pan}
\author[3]{Christoph Schlegel}

\affil[1,3]{Flashbots}
\affil[2]{California Institute of Technology}

\begin{document}

\maketitle

\begin{abstract}
We analyze competing auctions in intermediated markets, where a seller selects among parallel mechanisms for the sale of a single good, most prominently the relay-and-protocol architecture of proposer-builder separation in Ethereum.

When the intermediary can enforce single-homing on its bidders, sealed-bid second-price intermediary auctions fully unravel into the sealed first-price principal auction; open bidding-format intermediaries unravel only partially, collapsing into first-price in equilibrium under symmetric latency and sorting fast bidders to the intermediary under asymmetric latency. Any last-look advantage is removed through the availability of a credible sealed bidding channel. These results extend to multi-plexing environments (no enforcement by the intermediary).

While the unraveling result indicates that the availability of a sealed first-price bidding channel pushes the overall market to the same auction structure, the very assumption of the credibility of such channel is problematic, as the seller may have an incentive to leak information: a first-price auction is leakage-resistant in the presence of a single ``fast'' bidder but not against two or more. However, if the seller can credibly commit to not leak bids, it is optimal for them to do so. 

A main motivation is the forthcoming Glamsterdam update of Ethereum: our analysis suggests that the availability of an in-protocol (first-price) bidding channel severely limits the design space for out-of-protocol auctions by relays and other intermediaries.
\end{abstract}
\section{Introduction}
In many markets, the same good can be bought through different channels, often through competing auction mechanisms. In classical online marketplaces (eBay etc.) competing auctions usually mean that there are competing sellers of the same type of good; the sellers themselves may multi-home, offering their goods or services across different platforms at the same time (e.g. in ride-sharing services). In other contexts, however, there may be a single seller of a single good, while competition happens between intermediaries. While there is a rich literature on the former constellation of competing auctions by competing sellers starting with \cite{McAfee1993}, the case of competing auctions run by intermediaries for a single good has received much less attention. This is not only a theoretical gap, but practically very relevant. The most familiar example from traditional finance is order routing: a broker with a single order to execute chooses among competing venues running different auction mechanisms - lit exchanges, dark pools, retail wholesalers, periodic auctions, RFQ systems - and is required under best-execution rules to route the order to whichever venue offers the best outcome. Blockchain systems, which will be our main motivating examples, are even more abundant in intermediation services and we observe competing auctions throughout the transaction inclusion pipeline: Ethereum's MEV ecosystem offers a particularly rich set of examples, from order flow auctions (MEV-protect, MEV-blocker, etc.) over routing and RFQ services (1inch, CowSwap etc.) to builder competition (Titan, BuilderNet etc.) for searcher bundles. In each case, the transaction or intent is the single good, and competing intermediaries run different auction mechanisms to capture it.
A forthcoming Ethereum protocol upgrade - enshrined Proposer-Builder Separation (ePBS, \citealp{eip7732}) - introduces another instance of competing auction mechanisms, and the one we focus on in this paper. Currently, Proposer-Builder Separation, the market structure through which block builders bid for their blocks to be proposed, is implemented out-of-protocol through the help of intermediaries (so-called relays). Under ePBS, the protocol provides a native bid-and-commitment layer for builder bids, while off-protocol channels may still coexist with it. In particular, bids may be routed through public protocol gossip, private proposer-facing builder RPCs, or new relay-like intermediaries. Thus, we will possibly observe competition between direct in-protocol and intermediated bids. Our analysis predicts that the introduction of an in-protocol sealed-bid first-price channel severely constrains the design space available to off-protocol intermediaries, and we characterize which relay formats can and cannot survive this competition.

We analyze the question of what an in-protocol bidding channel does to relay intermediation from several angles: first we study what kind of pricing rule the competing auction can maintain in the presence of the in-protocol outside option. A plausible alternative choice to a first-price auction (sealed or open in form of an English auction) would be to run a second-price auction format. A second-price auction has the advantage that bidders have an incentive to bid their value truthfully in it.\footnote{Currently, some relays like the Ultrasound relay experiment with ``second-price'' features.} In the presence of a competing first-price channel, however, bidders still bid truthfully \emph{conditional on entering} the second-price auction (Lemma~\ref{lem:sp_truthful}), but entering it at all becomes unattractive: in the single-homing case, the second-price auction unravels and bidders self-select in the first-price (in-protocol) auction (Theorem~\ref{SPUnraveling}). In the multi-homing case, the good will almost surely be allocated through the first-price auction~(Theorem~\ref{Essential}). 

Second, we study how information can be shared throughout the auction: open bidding as in the status quo is naturally undermined by the availability of a sealed-bid in-protocol channel. Bidders have an incentive to listen to an open-bid feed but to bid through a sealed channel if available. However, the seller (or an intermediary acting on the seller's behalf) can choose to share bids received during the auction period, orchestrating a (partially) open-bid auction. Such information sharing naturally unravels in the sense that bidders have an incentive to bid as late as possible to avoid having their bids shared, effectively turning the auction into a sealed-bid auction (Propositions~\ref{nolastlook} and~\ref{cor:open_relay_no_leakage}). However, in the presence of latency differences by bidders, information sharing (of the slower bidders' bids with the faster bidders) may be feasible. We analyze this in a model where a subset of bidders is ``fast'' and the seller or intermediary may share bids with these bidders. We find that sharing information with a single fast bidder is not optimal for the seller (Proposition~\ref{1leak}), whereas sharing information with two or more fast bidders (in case there are that many fast bidders) is a best response for the seller (Theorem~\ref{2leak}) and fast bidders would benefit from bid disclosure (Proposition~\ref{lastlookrents}). However, if the seller can credibly commit ex-ante to a disclosure rule, the optimal commitment is to share no information at all (Proposition~\ref{prop:commitment_show}).

The interpretation of the ``information disclosure'' results in the Ethereum context is straightforward: in the absence of a credible first-price sealed-bidding channel, if there are at least two ``fast'' bidders, a proposer would have an incentive to leak the bids it has received (in or out-of protocol), effectively making the in-protocol bidding channel leaky.  Practically, this could be orchestrated through a service where proposers forward the received bids to a relay that integrates them in their bid stream. However, there are potentially ways to establish a credible sealed bid first-price channel: proposers could use reputation (in the case of large staking pools like Lido) or hardware solutions such as Trusted Execution Environments (TEEs) to commit to not leak bids that they have received. Alternatively, they can delegate to an intermediary, i.e. a potential ``sealed-bid first-price'' relay. As commitment to non-disclosure is optimal if possible, such a solution can quite plausibly recover the no-disclosure equilibrium institutionally.

We proceed as follows: in the next two subsections, we provide background on the literature on competing auctions, and institutional background on Ethereum's blockbuilding stack. In Section~\ref{sec:Modelling} we discuss our modeling choices for the subsequent analysis. The discussion is more detailed than the usual scope of such a section, primarily because some modeling choices are very much informed by institutional details of the block building ecosystem in Ethereum which we want to discuss in this context. In Section~\ref{sec:results} we derive our results, which are summarized in Table~\ref{tab:summary-results}. 
In Section~\ref{sec:conclusion}, we conclude with practical implications of our analysis and open questions.
\begin{table}[!t]\label{table}
\centering
\small
\setlength{\tabcolsep}{4pt}
\renewcommand{\arraystretch}{1.15}
\caption{Summary of results.}
\label{tab:summary-results}
\begin{tabularx}{\textwidth}{@{}p{1.6cm}p{4.4cm}Y@{}}
\toprule
Section & Setting & Result \\
\midrule
\S 3.1
&
Single-homing, sealed second-price relay
&
The native first-price auction is the unique symmetric equilibrium outcome (Theorem~\ref{SPUnraveling}).
\\
\S 3.2
&
Multi-plexing, second-price relay
&
The item is allocated through the first-price auction with probability $1$ (Theorem~\ref{Essential}).
\\
\S 3.3.1
&
No last-look advantage
&
All-first-price is an equilibrium; the all-English profile is unstable in large markets (Proposition~\ref{nolastlook}).
\\
\S 3.3.2
&
One fast bidder
&
The first-price bidding channel is leakage-resistant (Proposition~\ref{1leak}).
\\
\S 3.3.2
&
Multiple fast bidders, $|S|\geq 2$
&
The proposer leaks in equilibrium (Theorem~\ref{2leak}).
\\
\S 3.3.3
&
Last-look with credible proposer commitment
&
The proposer weakly prefers to commit to a credible sealed-bid first-price channel (Proposition~\ref{prop:commitment_show}).
\\
\bottomrule
\end{tabularx}
\end{table}

\subsection{Competing Auctions \& Related Literature}
The literature on \emph{competing auctions}~\cite[]{McAfee1993,PetersSeverinov1997,BurguetSakovics1999,ellison2004competing} studies the constellation where multiple sellers each offer their own good through a chosen mechanism, with buyers single-homing across sellers.  Our setting differs structurally: a single seller has a single good and faces multiple parallel mechanisms run by intermediaries, with the winning mechanism determined ex-post by a max-revenue rule. The strategic considerations are therefore quite different: bidders are not choosing between substitute goods but between different bidding channels, and competition operates on intermediaries rather than on sellers.

The literature on \emph{platform competition} \cite[]{CaillaudJullien2003,Armstrong2006,RochetTirole2006}, introduces the distinction between single-homing and multi-homing that also plays a role in our analysis. The general finding - buyer-side multi-homing intensifies competition between platforms, while single-homing produces winner-take-all dynamics - has no analog in our setting which focuses on equilibrium auction outcomes rather than platform pricing.

Our information leakage results connect to the literature on \emph{seller-controlled information disclosure in auctions}. The classical result here is the linkage principle of~\cite{MilgromWeber1982}, which establishes that under affiliated and interdependent values, the seller benefits from committing to reveal information that links the price more strongly to bidders' signals. This logic does not apply in our setting: at the builder level, values are private (although possibly drawn from a correlated distribution) and known with certainty, so additional information about other bidders' valuations has no effect on a bidder's own valuation. The relevant disclosure question becomes strategic rather than informational: the seller's choice is not about value discovery, but about changing the strategic environment by giving some bidders the ability to react to others' bids. Related issues are studied in the literature on information design in auctions under private values~\cite[]{BergemannPesendorfer2007}. 

The closest non-blockchain analog of our setting is in \emph{market microstructure}, where a substantial literature studies how a single trade is routed across competing exchanges, dark pools, and wholesalers under best-execution obligations~\cite[]{FoucaultPaganoRoell2023, OHaraYe2011}. A close institutional parallel to our latency-advantaged-bidder setting is the case of last-look provision in FX markets~\cite[]{Oomen2017}, where dealers selectively respond to client quote requests based on observed market information.

\subsection{From MEV-Boost to ePBS}
\label{sec:from-mevboost-to-epbs}

Ethereum's current block-building market is organized through MEV-Boost, an off-chain implementation of proposer-builder separation. Validators outsource execution-payload construction to specialized builders. Builders compete by submitting blocks and associated bids to relays, and the proposer signs the highest-paying blinded block header returned by MEV-Boost. The relay is therefore the central intermediary: it runs the auction, hides the block contents until the proposer has committed, and then releases the full payload. In the standard, non-optimistic flow, the relay also verifies block validity and proposer payment before forwarding the bid. The builder sets itself as the block's \texttt{feeRecipient} and includes a final transfer to the proposer's registered \texttt{feeRecipient} address; the relay checks that this transfer matches the advertised bid.

Relay designs have evolved to reduce latency on the critical path. In optimistic relay designs, builders post collateral, allowing the relay to accept bids before completing full validation. The relay still intermediates the auction and forwards the blinded header to the proposer, but payment and validity checks are performed ex-post. If the winning block is invalid or fails to pay the promised amount, the builder's collateral is used to compensate the proposer. Optimistic relaying therefore makes the MEV-Boost auction more latency-efficient, since builders can submit competitive bids later in the slot, but it does not remove the relay as the trusted off-chain intermediary.

Thus, MEV-Boost implements PBS without changing Ethereum consensus, but it does so by relying on off-chain infrastructure. Relays solve the fair-exchange problem between proposers and builders: builders need protection against payload theft or payload leakage before commitment, while proposers need assurance that the promised block is valid and pays the advertised amount. The resulting market is naturally modeled as an open, continuous, relay-mediated auction. Builders can update bids during the slot, relays intermediate the information flow, and latency determines how late bids can arrive and still be competitive.

Enshrined proposer-builder separation, or ePBS, changes this architecture by moving the core commitment-and-payment logic into the protocol. In the current EIP-7732 design \cite[]{eip7732}, the beacon block no longer contains the full execution payload on the attestation-critical path. Instead, it contains a signed builder commitment, the \texttt{SignedExecutionPayloadBid}, which commits to a later execution payload and specifies the value to be paid to the proposer. Builders become protocol-recognized entities with beacon-chain balances, so the committed payment can be deducted from the builder's balance and later credited to the proposer. Timely reveal is checked by a Payload Timeliness Committee (PTC), while full execution validation is deferred to a later point in the slot.\footnote{The current EIP-7732 design has some potential flaws if payload-delivery failure is not sanctioned and the PTC deadlines are positioned too far in the slot~\cite[]{mazorra2025free}. As there are plausible mitigation mechanisms for this problem, we assume throughout the analysis that these ``free-option'' considerations do not play a role for the questions we seek to answer in this paper.}

These protocol changes also alter the market structure. Under MEV-Boost, the relay observes and intermediates the bidding process, and the auction is effectively open, continuous, and latency-sensitive. Under ePBS, the protocol provides a native bid-and-commitment layer, while off-protocol channels may still coexist with it. The post-ePBS market may therefore contain competing auction formats: a native in-protocol mechanism and external auction channels that can differ in observability, timing, trust assumptions, and payment guarantees. Our goal is to understand when these mechanisms coexist, when one unravels into the other, and how the change from trusted relay-mediated exchange to protocol-mediated exchange affects builder and proposer incentives.

\textbf{Trustless vs Trusted Payments.} In MEV-Boost, payments are mediated by trusted off-chain infrastructure: the relay verifies that the builder's block contains the promised payment and releases the payload only after the proposer has signed the blinded block. In ePBS, by contrast, payment can be enforced in protocol. A builder's signed bid commits to a payment amount, which can be deducted from the builder's beacon-chain balance and later credited to the proposer.

The main cost of trustless payments is that builders must pre-fund their bids. This can increase entry barriers, since a builder needs sufficient in-protocol balance to participate. For ordinary blocks, this constraint may be mild: most bids are small, and if deposit and withdrawal frictions are low, the required working capital need not be large. For high-value blocks, however, the balance constraint can become binding. A builder with high value but insufficient balance may be unable to express its full willingness to pay through the trustless mechanism. In such cases, bids may still be routed through trusted relationships, relays, or other intermediaries that can provide credit, guarantees, or faster settlement. Subsequently, we will not model this particular friction explicitly, but the reader should keep in mind that inter-mediated bids may still have a role and certain advantages over in-protocol bidding, even if trustless bidding would be optimal based on ``pure auction considerations'' that are the main focus of our analysis.

\section{Modeling consideration}\label{sec:Modelling}

\subsection{Private values}
We will work throughout the paper in a private value setting. Some justification is needed here, as this often seems to be a point of possible confusion. Most block value is generated from financial transactions, which inherently have a common-value component (with at least some uncertainty and heterogeneous signal on this value). However, this does not mean that block builders have a common value for proposing the next block. An uncertain common value is effectively realized up-stream at the searcher level. Block builders receive bundles and compose them into a block whose value for the builder is deterministic and known with certainty. This would be different if builders did searching themselves, a market dynamic called ``searcher-builder integration'' that has been discussed extensively in the past~\cite[]{pai2024structural}. However, this dynamic has never materialized in reality, and at the current point in time the dominant Ethereum builders (Titan, BuilderNet, Quasar) are neutral and do not search. In particular, ``price discovery'' in the PBS auction, sometimes cited as an advantage of the current open auction format, is - if at all - relevant for searchers to update their bids to the builders. 
Thus, our analysis focuses on the practically relevant case of private values. As different builders tend to receive similar bundles (or even the same bundles from the same searcher), these private values are naturally positively correlated. However, we refrain from modeling the correlation structure explicitly, for the sake of analytical simplicity. We note, however, that positive correlation generally pushes in the same direction as the logic of most of our results, so that they likely generalize to models of positively  correlated private valuations.\footnote{A trivial form of introducing correlation for which our result would hold without modification of the proofs, is through a block-level public state: Fix a block
and condition on its realized public state; let $F$ denote
the value distribution induced by that state. Assume there is  a family $\mathcal F$ of distributions
supported on the same compact interval. For the realized distribution $F\in\mathcal F$,
bidders' values satisfy
\[
    v_i\mid F\sim_{\mathrm{i.i.d.}}F,
    \qquad i=1,\ldots,n .
\]Thus values can be correlated before conditioning:
the public state can shift the distribution faced by all bidders in the block.
Economically, $F$ captures the public market conditions of the current block.
Whenever a proposition refers to $F$, it can be read as conditional on an
arbitrary realized $F\in\mathcal F$.}
Thus, throughout the paper we assume that valuations are drawn i.i.d. from a distribution $F$ that is absolutely continuous and strictly increasing on a compact interval
$[\underline v,\overline v]$. We denote its density by $f$.

\subsection{Single vs Multi-plexing of bids}
The in-protocol auction in ePBS is permissionless. The protocol can neither restrict who enters the auction nor restrict whether they submit multiple bids through different channels (in-protocol and directly to the proposer or through a relay) or just a single bid. Thus, the ePBS auction is by default ``multi-plexing'' friendly. The case of relays is more subtle. Relays already today restrict certain functionalities (optimistic relaying) to registered and staked participants while being generally open to all participants for their standard service. In a future with ePBS, it is still to be seen whether relays become ``more permissioned'' than they are today. One potential reason they could develop in this direction is that they want to offer additional services to remain attractive even with an in-protocol mechanism in place; for example, block-merging by the relay is easier to coordinate with a permissioned set of builders. More importantly for our context, a relay might want to sustain an auction format that is otherwise not sustainable in the multi-plexing case. The current open-bidding PBS auction, for example, is hard to maintain in an ePBS world: if a relay offers an open-bidding mechanism, then bidders have a natural incentive to listen to the bidding feed of the relay but to place their own bid in a sealed way directly to the proposer, with the relay auction unraveling as a consequence. If a relay wants to fight this logic, it might restrict bidding to a permissioned set of builders and impose that, in case a builder bids with them in a particular auction round, they need to commit to not bidding through an external channel in the same round. This assumes, of course, that deviations are detectable with sufficient probability by the relay. If a relay manages to implement single-homing within its service, it makes the entire market single-homing: bidders need to select into one of the two mechanisms before they bid.

We find both scenarios - single-homing enforced at the relay level and multi-homing - plausible at first view. Thus, we think it is valuable to analyze both cases. 


\subsection{Latency and Last-look Advantage}
The performance and equilibrium analysis of  auctions in time-sensitive environments may be influenced by bidder latency considerations. 
The current MEV-Boost auction, for example, is naturally modeled, at an idealized level, as an English or candle-style auction. Relays receive continuously updated bids during the slot, builders can react to the current competitive level, and the process ends only when the proposer must commit to a block. 
But, in the current MEV-Boost world, bid streams and timing frictions make latency economically meaningful: a builder who observes a competing bid and can still respond before the deadline while others cannot obtains a form of \emph{last look}~\cite[]{pai2024structural} advantage.

For our analysis of open bidding and bid disclosure we, therefore, consider two types of environments:
We first consider the clean benchmark in which no bidder has a last-look advantage. In that benchmark, the relay auction may reveal information internally among relay participants, but bidders can submit a last bid that is not observed by competitors. Moreover, the relay winner does not observe the native first-price bids before choosing the proposer-facing bid and vice versa.

The no-last-look benchmark is an idealization.  We then study the more interesting case of last look explicitly. We treat technological latency as exogenous, but the exploitation of latency as endogenous: low-latency builders can use their advantage only if some information is made available before the deadline. That information may be public, relay-mediated, or selectively disclosed by the seller.

This leads to two versions of the last-look problem. In the first, the seller cannot credibly commit to a disclosure strategy. After receiving early bids, the seller may decide history by history which private bids to leak and to whom (of the latency advantaged bidders).\footnote{For the PBS application, practically, the proposer will still decides ex-ante, before the auction, how to configure their node (or more likely which third-party side-car software to install). The point it that while the proposer chooses their strategy ex-ante in this application, they cannot credibly commit to a disclosure strategy that is not an equilibrium strategy.} In the second, the seller can commit ex-ante to a specific disclosure rule. Practically, this disclosure can be implemented through a trusted intermediary, such as a relay in the PBS case. The distinction matters because a sealed-bid procedure can be ex-ante attractive, while after the early bids have arrived the seller may have an incentive to leak them to latency-advantaged builders.

\section{Results}\label{sec:results}
\subsection{Single-Homing: first-price vs second-price auction}
We first analyze the single-homing case.  There are $n\geq 2$ bidders who bid for the right to decide on the content of the next block. The market structure is as follows: there are two competing auctions. The first auction is a first-price sealed-bid auction (the native ePBS auction) and the second a competing relay auction. The bidders can submit a bid to one of the two auctions or neither auction. The two auctions are run on the submitted bid. Whichever auction yields higher revenue allocates the item and charges the winner. 

Bidders choose ad interim, i.e. after they know their valuation but before they send their first bid, through which channel they bid in the auction. The natural interpretation is that of a intermediary (relay) that can detect deviations (with high probability) where bidders bid through both channels and punish this ``multi-plexing'' behavior.

We first study whether a competing auction can enforce a second-price pricing rule. We restrict attention to symmetric equilibria.\footnote{A natural question is whether the unraveling results in this and the following section follow more directly from revenue equivalence. Revenue equivalence determines the expected payment in any efficient symmetric mechanism with zero surplus at the lowest type, but it presupposes a fixed allocation rule. In our setting, the allocation rule is itself an equilibrium object of interest. In principle, the good can be allocated inefficiently - for example, if a high-value bidder selects themselves into the second-price auction but is outbid by a lower-value bidder in the first-price auction. We therefore cannot invoke revenue equivalence to short-cut the analysis.}

\begin{theorem}[Single-homing unraveling]\label{SPUnraveling}
If bidders must choose to participate in exactly one
of the first-price auction and the sealed second-price auction, then, up to
$F$-null sets, there is a unique symmetric Bayes--Nash equilibrium. In this
equilibrium, every type chooses the first-price auction and bids according to
the standard symmetric first-price equilibrium.
\end{theorem}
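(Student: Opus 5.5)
Write the symmetric profile as a measurable channel choice $\sigma:[\underline v,\overline v]\to\{\mathrm{FP},\mathrm{SP},\emptyset\}$ together with a first-price bid function $\beta$ on the types that choose FP. The realized price is the maximum of the highest FP bid and the second-highest SP bid. The plan has three steps: show that the all-FP profile with the standard bid $\beta^*(v)=\mathbb E[\max_{j\ne i}v_j\mid \max_{j\ne i}v_j<v]$ is an equilibrium, then prove that no other symmetric profile is an equilibrium, and finally fix the bids.

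\textbf{Existence.} Suppose everyone else follows all-FP. A deviator who enters SP is the only SP bidder, so the SP auction's revenue is $0$ (or the reserve $\underline v$ at best). This is weakly below the highest FP bid, because $\beta^*\ge\underline v$. So the deviator wins with probability zero, up to ties on a null set, and gets zero payoff. Staying in FP with $\beta^*$ gives a payoff $\ge 0$, strictly positive for $v>\underline v$. Abstaining also gives zero. Optimality of $\beta^*$ within FP is the textbook result. Hence all-FP is an equilibrium.

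\textbf{Uniqueness.} Suppose the SP set $A=\sigma^{-1}(\mathrm{SP})$ has positive $F$-measure. Conditional on entering SP, truthful bidding is weakly dominant by Lemma~\ref{lem:sp_truthful}. A bidder in SP wins only if they are the top SP bidder and the second-highest SP value exceeds every FP bid. So an SP bidder pays at least the second SP value and competes against the FP bids on top of that. I will compare a type $v\in A$ with the deviation to FP using the bid $b=p(v)$. Here $p(v)$ is the expected price paid by an SP winner of type $v$, conditional on winning, or alternatively the supremum of the prices at which $v$ wins.

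The key observation concerns a realization in which $v$ wins in SP at price $q$ (the second SP value). There, $q\ge$ every FP bid. Bidding slightly above $q$ in FP would also win, since the SP auction's revenue would then drop to the third SP value, which is $\le q$. This costs the same price up to $\varepsilon$. It also wins in additional events, for instance when $v$ is the only SP entrant and could not have won in SP. To make the gain strict, I will use a single fixed FP bid $b$ rather than a realization-dependent one. Concretely, take $v^*=\sup A$ (essential supremum) and look at types near the top of $A$. For $v$ near $v^*$, the event in which $v$ is the only SP bidder, or the top SP bidder whose competitors in SP are all low, has positive probability. In that event SP yields zero surplus, while an FP bid of $\beta$ just above the competing FP bids yields positive surplus. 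A cleaner route is this: show that the SP payoff $U_{SP}(v)$ and the FP deviation payoff $U_{FP}(v)$ are both convex with derivatives equal to the respective win probabilities, by the envelope theorem. Then show that for types in $A$ the win probability in FP under the optimal deviation is strictly higher. Since the lowest type gets zero in both, integrating gives $U_{FP}(v)>U_{SP}(v)$ on a positive-measure subset of $A$. This contradicts optimality. The $\emptyset$ option is ruled out similarly, since any $v>\underline v$ gets positive payoff in FP.

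\textbf{Main obstacle.} The hard part is the strict win-probability comparison in the uniqueness step. I would try to prove that for any $v\in A$, every event in which $v$ wins in SP is also won by a suitable FP bid. To get this with a single fixed bid, I would pick, for each $v$, the bid $b(v)=\sup\{\text{price } v \text{ ever pays in SP}\}$ and handle the resulting overpayment via a monotonicity/envelope argument rather than pointwise. The extra winning event where $v$ is alone in SP is what supplies the strictness.

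\textbf{Bids.} Once every type is in FP, the standard symmetric FP characterization (a strictly increasing equilibrium bid, ODE with boundary condition $\beta(\underline v)=\underline v$) pins down $\beta^*$ up to null sets.
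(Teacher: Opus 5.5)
Your existence argument and your final step fixing the bids are fine; they make explicit what the paper leaves implicit. The uniqueness step, however, is the actual content of the theorem, and it is not proved. The routes you sketch would not close it.

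Your pointwise observation is correct: an SP win at price $q$ is replicated by an FP bid of $q$. But that requires a realization-dependent bid. With a single bid $b$ you either overpay ($b$ equal to the supremum of prices) or lose the events with $q>b$ ($b$ equal to the expected price), and there is no general domination to fall back on.

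For the same reason, localizing at the top of $A$ cannot work from payoff comparisons alone. Neither can the claim that, for every type in $A$, the FP win probability at the optimal deviation exceeds the SP win probability. Take the profile in which all types bid truthfully in SP, with $F$ uniform on $[0,1]$ and $n=5$:
\begin{itemize}
\item The top type wins SP with probability $1$ and earns $1/5$.
\item An FP bid $b$ wins only if the second-highest of the four other values is at most $b$.
\item So the best FP deviation yields $\max_b (1-b)(4b^3-3b^4)=16/81<1/5$, attained at $b=2/3$ with win probability $16/27$.
\end{itemize}
Hence the top type, and by continuity types near it, strictly prefer SP. Such a profile fails to be an equilibrium only because low types defect. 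In addition, integrating envelope derivatives from $\underline v$ would require comparing them at every $t\le v$, not only on $A$.

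The missing idea is to localize at the bottom of the SP set $E$, as the paper does. Normalize the support to $[0,1]$ and let $\alpha=\operatorname*{ess\,inf}E$. There are two steps, and your proposal contains neither.

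\begin{itemize}
\item \textbf{Case $\alpha>0$.} A type $v\in E$ near $\alpha$ wins in SP only if some other SP participant has value in $[\alpha,v)$. Its SP payoff is therefore at most $(v-\alpha)(n-1)\mu([\alpha,v])$. Bidding $\alpha$ in FP wins whenever all other values are below $\alpha$; those opponents are then FP bidders bidding below $\alpha$. This yields at least $(v-\alpha)F(\alpha)^{n-1}$. Since $\mu([\alpha,v])\to 0$ as $v\downarrow\alpha$, a positive-measure set of types in $E$ just above $\alpha$ strictly prefers FP, a contradiction.

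\item \textbf{Case $\alpha=0$.} This needs a separate comparison. The paper bounds the SP payoff of $v\in E$ by $\mu([0,v]\cap E)$ times the maximum, over the price $t$, of the integrand in $U_2(v)$. It then shows that bidding the maximizer $t^*$ in FP is strictly better whenever $(n-1)\mu([0,v]\cap E)<\mu(E)$. The reason is that an SP win needs the price-setting opponent to lie in $E\cap[0,v]$. The FP bid $t^*$, by contrast, still wins when one SP opponent lies anywhere in $E$, since that opponent does not set the SP price.
\end{itemize}
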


We separate the proof into several lemmas.

\begin{lemma}\label{lem:sp_truthful}
In equilibrium, if an agent bids in the second-price auction, then she bids truthfully almost surely.
\end{lemma}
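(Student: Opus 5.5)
The approach is to fix a symmetric equilibrium and argue pointwise in the bidder's own value. Consider a bidder $i$ with value $v$ who enters the second-price auction, and condition on everything the other bidders do. That is, fix their channel choices and their bidding strategies in both auctions, together with the realized bids these generate. Let $P$ denote the highest bid submitted in the first-price auction, and $Q$ the highest bid among the other bidders in the second-price auction; both are random variables that do not depend on $i$'s bid $b$. Since the item is allocated by whichever auction yields higher revenue, I will write bidder $i$'s outcome as a function of $b$.

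Bidder $i$ wins only if $b > Q$, and then pays $Q$. In that event the second-price auction's revenue is $Q$, so it is chosen only if $Q > P$, up to tie-breaking. It follows that $i$ wins if and only if $b > Q$ and $Q \geq P$ (modulo ties), and pays $Q$ regardless of $b$. Her payoff is therefore $(v-Q)\mathbf 1\{b>Q\}\mathbf 1\{Q\ge P\}$. Realization by realization, this is weakly maximized by $b=v$, exactly as in the standard second-price argument, and the presence of the first-price auction acts only as an extra event $\{Q\ge P\}$ that is independent of $b$. Truthful bidding is thus weakly dominant in the second-price auction, conditional on entering it.

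To get "almost surely" rather than just weak dominance, I will show that any bid $b \neq v$ is strictly worse whenever the joint law of $(Q,P)$ puts positive mass on the region where $Q$ lies between $b$ and $v$ and $Q > P$. I then need to rule out equilibria in which types bid untruthfully on a non-null set where this mass is zero. There are two cases.
\begin{itemize}
\item If some other bidder enters the second-price auction with positive probability and, by symmetry, uses the same strategy, then $Q$ has a distribution induced by the atomless $F$.
\item If $Q$ carries no mass there, then the deviation is payoff-irrelevant.
\end{itemize}
In the second case the lemma should be read as saying that truthful bidding can be assumed without loss, or the authors intend "bids truthfully a.s." in the payoff-relevant sense. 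I would handle this by showing that under symmetry, a type-$v$ bidder bidding $b<v$ on a positive-measure set of types creates a positive-probability event in which a competing second-price entrant with value in $(b,v)$ has $Q>P$, so the deviation to $v$ strictly gains.

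The main obstacle is this strictness step. It requires showing that the event $\{Q\in(b,v),\, Q>P\}$ has positive probability. That in turn needs that, in the symmetric equilibrium, the second-price entrants' bids have positive density near those values and that they are not always beaten by the first-price maximum. My plan is to argue by contradiction: if second-price bids in that range always lose to first-price revenue, then such types are effectively not competing. Their untruthful bids change nothing, so they can be replaced by truthful bids without affecting any outcome. This yields the statement up to $F$-null sets.

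Ties ($Q=P$ or $b=Q$) occur with probability zero because of atomlessness, combined with the strictly increasing first-price equilibrium bids established later. I will note that I treat them as null events.
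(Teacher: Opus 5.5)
\textbf{There is a genuine gap in the strictness step.} Your weak-dominance step is correct, and it is more explicit than the paper's. Writing the payoff as $(v-Q)\mathbf{1}\{b>Q\}\mathbf{1}\{Q\ge P\}$ makes clear that the first-price channel enters only through an event that does not depend on $b$. The gap is that all the content of ``almost surely'' lies in strictness, and your fallback fails there. You argue that if $\Pr[Q\in(b,v),\,Q>P]=0$, the untruthful types ``change nothing'' and can be replaced by truthful bids without affecting any outcome. This is false, because a second-price bid is also a potential price. Take a state in which another entrant tops the second-price auction with a bid above $v$, and let $Q_2$ be the second-highest of the others' second-price bids. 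The second-price revenue is then $\max\{b,Q_2\}$ under the untruthful bid and $\max\{v,Q_2\}$ under the truthful one. If $\max\{b,Q_2\}<P<v$, the item moves from the first-price winner to the second-price winner. So the type-$v$ bidder's own indifference does not make her bid outcome-irrelevant. The replacement changes other bidders' payoffs and incentives, the modified profile need not be an equilibrium, and nothing follows ``up to $F$-null sets''. Your sketched symmetry argument has two related slips. An entrant whose \emph{value} lies in $(b,v)$ need not \emph{bid} in $(b,v)$, since she may also underbid. And atomlessness of $Q$, which you use for ties, is not available before truthfulness is known, since untruthful bids may pool.

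\textbf{The missing idea is to obtain strictness from a joint event with a common cutoff, not type by type.} This follows the lines of the paper's argument.
\begin{itemize}
\item \emph{Overbidding.} If a positive measure of entrants overbid, pick $c$ such that a positive measure of types have value below $c$ and second-price bid above $c$. With positive probability, every one of the $n$ bidders is such a type. On that event the first-price channel is empty and the second-price price exceeds $c$. Hence the second-price channel allocates, and its winner pays more than her value, a loss that a truthful bid would have avoided.
\item \emph{Underbidding.} If a positive measure underbid, pick $c$ such that a positive measure of types have value above $c$ but bid below it. When every bidder is such a type, each entrant other than the top one loses. A truthful bid would instead win at the highest competing bid, which lies above her own bid and below $c$, hence below her value. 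Since the first-price channel is empty on this event, the second-price channel then allocates, up to ties at $\underline v$ if bids pool there.
\end{itemize}
On these positive-probability events some bidder strictly gains by switching to her value, and by your pointwise argument no bidder ever loses by doing so. Symmetry then yields a positive-measure set of types with a strictly positive interim gain, which contradicts equilibrium. Restricting to the event on which all bidders are second-price entrants is exactly what removes the $\{Q\ge P\}$ obstacle you identified.
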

\begin{proof}
By symmetry, we may assume that bidders who participate in the second-price auction bid truthfully almost surely. The reason is simple. Truthful bidding weakly dominates every other bid in the second-price auction: lowering the bid below one's value can only remove wins at profitable prices, while raising the bid above one's value can only add wins at loss-making prices. Moreover, in a symmetric equilibrium any positive-measure set of non-truthful second-price bids would make this domination strict on a positive-probability event. If a positive measure of types overbid, then by symmetry there is positive probability that all second-price participants have values below some cutoff but bids above it; the winner then pays more than her value, while truthful bidding would avoid this loss. If a positive measure of types underbid, then with positive probability all second-price participants have values above some cutoff but bids below it; at least one bidder then loses even though a truthful bid would win at a price below her value. Thus any non-truthful second-price bid used on a positive-measure set would admit a strictly profitable deviation to the truthful bid, contradicting equilibrium. Hence, conditional on participating in the second-price auction, bidders bid truthfully almost surely.
\end{proof}

\begin{lemma}\label{lem:fp_bid_monotone_single_homing}
If two bidders $v,w\in(\underline{v},\overline{v}]$ participate in the first-price auction with positive probability, then $v>w$ implies $b(v)>b(w)$.
\end{lemma}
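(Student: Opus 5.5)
The plan is a standard revealed-preference argument, adapted to the two-channel allocation rule. Fix a symmetric equilibrium and let $b(\cdot)$ be the (possibly mixed, but we work with any bid in the support) first-price bid. Let $\sigma$ be the common channel-choice strategy, under which each opponent independently enters the first-price auction, enters the second-price auction (bidding truthfully by Lemma~\ref{lem:sp_truthful}), or abstains. For a bidder who bids $x$ in the first-price auction, define the winning probability $P(x)$. This is the probability that $x$ exceeds every opponent's first-price bid and also exceeds the second-price revenue, i.e.\ the second-highest truthful bid among second-price entrants (or zero/reserve if fewer than two). Since opponents' strategies do not depend on the bidder's own type, $P$ depends only on $x$, and the interim payoff of type $v$ bidding $x$ is $(v-x)P(x)$.

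First, I establish weak monotonicity. Suppose $v>w$ both participate in the first-price auction with bids $x=b(v)$ and $y=b(w)$. Optimality of each type's bid within the first-price channel gives $(v-x)P(x)\ge (v-y)P(y)$ and $(w-y)P(y)\ge (w-x)P(x)$. Adding the two yields $(v-w)\bigl(P(x)-P(y)\bigr)\ge 0$, so $P(x)\ge P(y)$. Since $P$ is nondecreasing, this rules out $x<y$ unless $P(x)=P(y)$. In that case the first inequality gives $x\le y$ whenever $P(y)>0$, and I then argue that $w$ strictly prefers $x$, a contradiction. Hence $x\ge y$.

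Second, I prove strictness, which is the main obstacle. Suppose $x=b(v)=b(w)=y$ for $v>w$. Weak monotonicity forces $b$ to be constant on $[w,v]$ for participating types, so a positive measure of first-price participants pool at $x$; I expect to need positive measure from the participation hypothesis together with symmetry. Then $P$ has an atom at $x$: there is positive probability of a tie at $x$ among opponents. Type $v$ could bid $x+\varepsilon$ instead and break every tie in its favour. This increases its winning probability by a discrete amount at cost $\varepsilon P(x+\varepsilon)$, and the deviation is profitable provided $v>x$. If instead $x\ge v>w$, then type $w$ earns a strictly negative or zero payoff, and is beaten by abstaining or by bidding below $w$, the latter having positive win probability because $w>\underline v$. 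The delicate part is guaranteeing that the pooled set has positive $F$-measure, rather than being just two types, since ties only matter with positive probability if they do. I will handle this using the weak monotonicity step: any two participating types with equal bids sandwich an interval of participating types, or else types in between choose the second-price auction. I then need to check that the tie probability at $x$ is still positive, possibly via a separate atom argument, so that the $\varepsilon$-undercutting and overbidding deviations apply.
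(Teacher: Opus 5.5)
Your first step is the paper's proof in full. The paper sums the two revealed-preference inequalities, uses that the common winning probability is positive because $w>\underline v$, and concludes from type $w$'s inequality that $b(v)<b(w)$ is impossible. The paper stops there, so what it actually establishes is $b(v)\ge b(w)$; the case $b(v)=b(w)$ is never treated. You are right that strictness is the real content, but your second step does not settle it.

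\textbf{The gap.} It is the case you flag and leave open: the first-price participants pooling at $x$ in $[w,v]$ form an $F$-null set. Since abstaining is strictly worse for types above $\underline v$, almost every type in $(w,v)$ then selects the second-price auction. In that case the first-price bid distribution has no atom at $x$, ties at $x$ occur with probability zero, and the deviation to $x+\varepsilon$ gains nothing discrete. No further ``atom argument'' can rescue this, because the tie probability is genuinely zero.

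\textbf{A possible fix.} A different idea is needed, for instance a comparison of channel values.
\begin{itemize}
\item For each fixed $y$, the difference $(u-x)P(x)-(u-y)P(y)$ is affine in $u$. So if $x$ is first-price optimal for both $w$ and $v$, it is first-price optimal for every $u\in[w,v]$. Hence the first-price value $U_1$ is affine on $[w,v]$.
\item Let $M$ be the highest competing second-price bid and $B$ the highest first-price bid. The second-price value is $U_2(u)=\int_{\underline v}^{u}\Pr[B<M<t]\,dt$. Almost every type in $(w,v)$ bids its value in the second-price auction, so $\Pr[B<M<t]$ is strictly increasing on $[w,v]$ and $U_2$ is strictly convex there.
\item The endpoint types weakly prefer the first-price channel, and almost every interior type weakly prefers the second-price one. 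By continuity and convexity of $U_2$, this forces $U_2=U_1$ on $[w,v]$, contradicting strict convexity.
\end{itemize}

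Only once this case is excluded does your tie-breaking deviation finish the proof, namely when a positive measure of types in $[w,v]$ bids $x$. Even there you still need two conditions so that the tie event (with all other bids below $x$) has positive probability. First, a tie rule that does not always favour the deviator. Second, $x>\underline v$.
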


\begin{proof}[Proof of the theorem]
For notational simplicity, normalize the support to $[0,1]$. Let $\mu$ be the distribution associated with $F$. Suppose by contradiction that there exists a symmetric equilibrium with positive participation in the second-price auction. Let $E\subseteq[0,1]$ denote the set of types who participate in the second-price auction. 

We first show that $\alpha:=\text{ess inf}(E)>0$. A type with valuation $v$, by Lemmas~\ref{lem:sp_truthful} and~\ref{lem:fp_bid_monotone_single_homing}, obtains from the second-price auction payoff
\begin{equation*}
    U_2(v)
    :=
    \int_{[0,v]\cap E}
        (v-t)
        \sum_{k=1}^{n-1}
        \binom{n-1}{k}
        k\cdot
        \mu(b^{-1}([0,t])\cap E^c)^{n-1-k}
        \mu([0,t]\cap E)^{k-1}
    \,d\mu (t).
\end{equation*}
Here $\binom{n-1}{k}$ samples $k$ other participants of the second-price auction, and the following factor $k$ selects one participant to have value $t$.

Suppose that the function
\begin{equation*}
    t
    \mapsto
    (v-t)
    \sum_{k=1}^{n-1}
    \binom{n-1}{k}
    k\cdot
    \mu(b^{-1}([0,t])\cap E^c)^{n-1-k}
    \mu([0,t]\cap E)^{k-1}
\end{equation*}
is maximized at some $t=t^*$. Then
\begin{equation*}
    U_2(v)
    \leq
    \mu([0,v]\cap E)(v-t^*)
    \sum_{k=1}^{n-1}
    \binom{n-1}{k}
    k
    \mu([0,t^*]\cap E)^{k-1}
    \mu(b^{-1}([0,t^*])\cap E^c)^{n-1-k}.
    \tag{1}
\end{equation*}

If the type chooses to bid $t^*$ in the first-price auction instead, her utility payoff is at least
\begin{equation*}
\begin{aligned}
    &(v-t^*)\mu(b^{-1}([0,t^*])\cap E^c)^{n-1} \\
    &\qquad
    +
    (v-t^*)
    \sum_{k=1}^{n-1}
    \binom{n-1}{k}
    \mu(E)
    \mu([0,t^*]\cap E)^{k-1}
    \mu(b^{-1}([0,t^*])\cap E^c)^{n-1-k}.
\end{aligned}
    \tag{2}
\end{equation*}
Here $\binom{n-1}{k}\mu(E)\mu([0,t^*]\cap E)^{k-1}$ comes from first sampling $k$ participants in the second-price auction, letting one participant bid arbitrarily, and having the remaining $k-1$ participants bid less than $t^*$.

By comparing $(2)$ with $(1)$ term by term, as long as
\begin{equation*}
    (n-1)\mu([0,v]\cap E)<\mu(E),
\end{equation*}
we have $(2)>(1)$. This is true when $v$ is small enough. In other words, when $v$ is small enough, $v$ prefers the first-price auction to the second-price auction, and so $\alpha>0$

Let $\varepsilon>0$ be small such that $\alpha+\varepsilon\in E$. When participating in the second-price auction, bidder $\alpha+\varepsilon$ has chance of winning bounded above by
\begin{equation*}
    (n-1)\mu([\alpha,\alpha+\varepsilon]).
\end{equation*}
So her utility payoff is at most
\begin{equation*}
    \varepsilon(n-1)\mu([\alpha,\alpha+\varepsilon]).
\end{equation*}
However, if she bids $\alpha$ in the first-price auction, she wins whenever all other participants have value smaller than $\alpha$, so her utility payoff is at least
\begin{equation*}
    \varepsilon \mu([0,\alpha])^{n-1}.
\end{equation*}
Thus, taking $\varepsilon$ sufficiently small,
\begin{equation*}
    \mu([0,\alpha])^{n-1}
    \geq
    (n-1)\mu([\alpha,\alpha+\varepsilon]),
\end{equation*}
and so, bidder $\alpha+\varepsilon$ prefers the first-price auction, a contradiction.
\end{proof}
Theorem~\ref{SPUnraveling} implies that, under single-homing, the sealed
second-price relay does not attract participation: in the unique symmetric
equilibrium, all bidders bid in the first-price auction, and the item is
allocated through that channel. In the Ethereum interpretation, the block is
therefore allocated through the native direct-communication channel rather than
through the relay. This conclusion, however, relies on the relay's ability to
enforce single-homing. In a permissionless environment, a relay may restrict
the identities that participate in its auction, but it cannot generally verify
that an identity corresponds to a unique economic bidder. A bidder who wants to
bid in both channels can use Sybil identities, submitting one bid through the
relay and another through the native auction while appearing as distinct
participants. Thus, single-homing has bite only to the extent that such
deviations can be detected. When Sybil participation is feasible, bidders can
circumvent the restriction by splitting bids across identities, so the effective
environment is one of multi-homing. We analyze that case next.

\subsection{Multi-Homing: first-price vs second-price auction}

In this section, we prove that for the multi-plexing case with competing first-price and second-price auction, there is an essentially unique symmetric pure-strategy Nash equilibrium in which the second-price auction unravels. In this context, a symmetric pure strategy is a measurable map
\begin{equation*}
      v\mapsto (b(v),w(v)).  
\end{equation*}
For a bidder with valuation $v$, the first coordinate $b(v)$ denotes the bid
submitted to the first-price auction, and the second coordinate $w(v)$ denotes
the bid submitted to the sealed second-price auction. Non-participation in a
channel is payoff-equivalent to submitting a bid at the lower bound
$\underline v$. The precise statement is the following. 

\begin{theorem}[Essential uniqueness]\label{Essential}
Let $b^{FP}$ be the unique symmetric pure-strategy equilibrium bidding function
of the first-price auction. Every symmetric pure-strategy Bayes-Nash equilibrium 
$v\mapsto (b(v),w(v))$ such that $b$ is a.e. strictly increasing whenever the agent bids in the first-price auction is of the form $v\mapsto (b^{FP}(v),w(v)) $ with $w(v)\leq b^{FP}(v)$.
\end{theorem}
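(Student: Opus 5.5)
We need to show that in any symmetric pure equilibrium $(b,w)$ with $b$ a.e.\ strictly increasing, $b=b^{FP}$ and $w\le b^{FP}$. The plan has four steps.

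\textbf{Step 1: the second-price channel almost never determines the allocation.} Revenue of the second-price auction is the second-highest $w$; revenue of the first-price auction is the highest $b$. We first show that, up to a null set, the second-highest $w$ never strictly exceeds the highest $b$. Suppose the event $\{w_{(2)}>b_{(1)}\}$ had positive probability. Consider a type $v$ who is, with positive probability, the top $w$-bidder on this event. In that event she wins in the second-price channel and pays $w_{(2)}$. The first step of the argument is to show $w(v)\ge v$ is not profitable beyond truthful bidding. As in Lemma~\ref{lem:sp_truthful}, she cannot gain by overbidding: any win at a price above $v$ is a loss. Hence on the relevant event $w_{(2)}\le v$ a.s.

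\textbf{Step 2: the profitable deviation against second-price pricing.} Now take a second-highest $w$-bidder $u$ on the event above. This bidder has $w(u)>b_{(1)}$. Since $w(u)\le u$ a.s., the first-price channel is being beaten by a bid below $u$'s own value. Bidder $u$ can therefore raise her first-price bid to slightly above the prevailing $b_{(1)}$ but below $w(u)$, and win at a price below her value. More precisely, the deviation is $b(u)\mapsto b(u)+\delta$. By monotonicity of $b$ (the hypothesis) and continuity of $F$, this raises $u$'s winning probability on a positive-probability set at a price below $u$. The cost is only a second-order change on events where she already won. This is the standard first-order argument. I expect this to be the main obstacle: one must show that the marginal gain from capturing the event where the second-price auction would otherwise win is of first order, while the marginal cost is $\delta$ times her existing win probability. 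If instead equilibrium $b$ is already optimal against the first-price competition alone, the additional mass of wins creates a strict improvement. I would first handle this via a comparison of the left and right derivatives of her interim payoff in $b$, using that the distribution of $\max(b_{(1)}^{-i}, w_{(2)})$ has an atom-free jump in density on that event.

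\textbf{Step 3: reduction to a standalone first-price auction.} Once the second-price auction a.s.\ never sells, each bidder's payoff depends only on the first-price bids, up to the null event. A bidder's $w$ can still matter, but only by raising $w_{(2)}$ against others. However, the tie or exceedance event has probability zero, so the interim payoff is exactly that of a standalone first-price auction with i.i.d.\ values. The first-price component $b$ must then be a symmetric equilibrium of the standard first-price auction that is strictly increasing a.e. By the standard uniqueness (the ODE characterization with boundary condition $b(\underline v)=\underline v$, following Lemma~\ref{lem:fp_bid_monotone_single_homing}), it follows that $b=b^{FP}$.

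\textbf{Step 4: the restriction on $w$.} Finally, $w(v)\le b^{FP}(v)$ a.e.\ follows from Step 1. If a positive measure of types had $w(v)>b^{FP}(v)$, then with positive probability the two highest $w$'s both come from such types among the top values. Then $w_{(2)}>b^{FP}(v_{(1)})$, so the second-price auction would sell with positive probability, contradicting Step 1. Choosing a positive-measure set of types $v$ near a density point where $w-b^{FP}>\eta>0$ makes this event have positive probability, since $b^{FP}$ is continuous and increasing. This completes the proof.
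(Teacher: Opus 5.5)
\textbf{Gap in Step 2.} Your Step 2 carries the whole argument, and the deviation it proposes does not produce any new wins. Take the second-highest $w$-bidder $u$ on the event $\{w_{(2)}>b_{(1)}\}$; note $u\in P$. On that event the second-price revenue is $w(u)$ itself. If $u$ raises her first-price bid to some $x\in(b_{(1)},w(u))$, the first-price revenue becomes $x<w(u)$. The good therefore still goes through the second-price auction, to the top $w$-bidder, not to $u$.

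To redirect the allocation she must bid at least $w(u)$. With the normalization $w(u)=u$ on $P$ (Lemma~\ref{lemma:truthful_second_coordinate}), that means winning at zero surplus. The only other option is to also withdraw her second-price bid, which forfeits her second-price wins in other states, and your proposal never accounts for that trade-off.

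The marginal version $b(u)\mapsto b(u)+\delta$ fails for the same reason, and the monotonicity hypothesis is exactly what kills it. If $B=\max_{j\neq i}b(v_j)$ is near $b(u)$, strict monotonicity forces the marginal competitor's type to be near $u$. For types in $P$, that competitor's second-price bid is her value, which is near $u>b(u)+\delta$. So the first-price events you hope to ``capture'' are still allocated by the second-price channel, and the first-order gain is zero, not positive.

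\textbf{What the paper does instead.} The paper turns this observation around. For $v\in P$, the marginal-benefit term of the first-price bid carries the factor $\Pr[W\le b(v)\mid B=b(v)]=0$, so
\[
\frac{\partial \Pi}{\partial x}(v,b(v)) \;=\; -\Pr[B\le b(v),\ W\le b(v)] \;\le\; -F(b(v))^{n-1} \;<\; 0 .
\]
The profitable deviation is therefore to \emph{lower} the first-price bid, not raise it. This contradicts optimality and gives $P=\emptyset$. Then $w\le b$, the second-price channel never allocates (since $w_{(2)}\le b_{(2)}<b_{(1)}$ a.s.), and uniqueness of the symmetric first-price equilibrium yields $b=b^{FP}$. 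Your Steps 3 and 4 would handle this last part once the correct Step 2 is in place.

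\textbf{Missing participation step.} The strict inequality needs $F(b(v))>0$, i.e.\ that almost every type actually bids in the first-price auction. This is Lemma~\ref{lemma:b_above_lower}. Your proposal never rules out a positive mass of types who bid only in the second-price auction. Yet it uses monotonicity of $b$ globally, whereas the hypothesis only gives it on first-price participants.
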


Under any such equilibrium, the item is always allocated through the
first-price auction. Hence the second coordinate $w$ is payoff-irrelevant and
may, without loss of generality, be taken to be identically zero.

Throughout this subsection, fix a symmetric pure-strategy equilibrium
$v\mapsto (b(v),w(v))$ of the two-auction game, and define
\begin{equation*}
    P:=\{v:w(v)>b(v)\}.
\end{equation*}
We identify strategies that differ only on an $F$-null set. We also use the
standard observation that first-price overbidding is weakly dominated, so we
may take $b(v)< v$ for every $v>\underline v$.

\begin{lemma}\label{lemma:truthful_second_coordinate}
Fix a type $v$ and a first-price bid $x\leq v$. Then, for every second-price
bid $s>x$ and $s\not=v$, the action $(x,v)$ weakly dominates $(x,s)$ and $w(v)=v$ for almost every $v\in P$.
\end{lemma}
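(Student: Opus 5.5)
Fix the type $v$, the first-price bid $x\le v$, and the bids of the other $n-1$ bidders. We compare $(x,v)$ with $(x,s)$ realization by realization. The allocation rule is: the auction yielding higher revenue allocates and charges its own winner. Let $B$ denote the highest competing first-price bid and $S$ the highest competing second-price bid. The first-price revenue is then $\max(x,B)$. The second-price revenue is the second-highest bid in the second-price auction, which involves our bid $s$ only through $\min(s,S)$ when our bid is top, or through $S$'s runner-up otherwise.

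The key observation is that our bidder's payoff from a given realization depends on the second-price bid only through three things: whether she wins the second-price auction (that is, $s>S$), the resulting second-price revenue, and whether that revenue beats $\max(x,B)$. We split into cases.

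\begin{itemize}
\item If she does not top the second-price auction under either bid, the outcome is unchanged. The exception is the case in which her bid is the second-highest and so sets the revenue. There, raising or lowering her bid only changes whether another bidder wins via the second-price auction or she or another bidder wins via the first-price auction. Since she is not the winner in the second-price auction, she obtains either $0$ or $v-x\ge 0$ from the first-price auction.
\item If she tops the second-price auction, she pays $S$, provided $S>\max(x,B)$. So with bid $s$ she wins through the second-price auction exactly when $s>S>\max(x,B)$, with payoff $v-S$.
\end{itemize}

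Truthful $s=v$ selects precisely the realizations with $S<v$, where $v-S>0$. It is a standard second-price argument that it also excludes those with $S>v$, which would be losses. In each such realization, the alternative to winning in the second-price auction is either winning in the first-price auction at price $x$ (payoff $v-x$) or losing (payoff $0$).

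The delicate part is exactly this comparison. When $S\in(\max(x,B),v)$, is $v-S$ always at least the alternative? If $x> B$ and the second-price auction is not used, she wins the first-price auction and gets $v-x>v-S$. So truthful bidding could be worse than bidding low in the second-price auction. This is why the hypothesis $s>x$ is imposed: I would restrict the comparison to bids $s>x$ and check the cases. On realizations with $S<x$, the second-price auction cannot beat her own first-price bid regardless, so the two bids coincide there. On realizations with $S\in(x,\min(s,v))$, both bids win via the second-price auction identically. The two bids differ only for $S$ between $s$ and $v$. If $s<v$, the truthful bid gains $v-S>0$ versus the alternative. Since $x<S$, the alternative first-price win is displaced anyway, or was worth $v-x$ only if $B$ is large, in which case the first-price revenue would beat $S$; I need to check this boundary carefully, and it is the main obstacle. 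If $s>v$, the truthful bid avoids losses $v-S<0$. Weak domination follows.

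For the second claim, on $P$ we have $w(v)>b(v)$. Applying the first part with $x=b(v)$ shows that $(b(v),v)$ weakly dominates $(b(v),w(v))$. I would then upgrade this to strict improvement on a positive-probability event whenever the set $\{v\in P: w(v)\ne v\}$ has positive measure. This mimics Lemma~\ref{lem:sp_truthful}: by symmetry the other bidders' second-price bids on this set land in the region between $w(v)$ and $v$ with positive probability, using the absolute continuity of $F$. This contradicts equilibrium, so $w(v)=v$ for a.e.\ $v\in P$.
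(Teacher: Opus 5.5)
Your overall structure matches the paper's: you compare the two actions realization by realization and localize the difference to the event that the top opposing second-price bid $S$ lies between $s$ and $v$. But you leave the decisive step open. You say yourself that the boundary comparison ``is the main obstacle''. In the case where her bid is second-highest, you only conclude that she gets ``either $0$ or $v-x$'', and that is exactly what must be pinned down: if she could get $v-x$ under $(x,s)$ and $0$ under $(x,v)$, weak dominance would fail.

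The missing observation is that for any second-price bid $y\geq x$ (both $y=s>x$ and $y=v\geq x$ qualify), $y<S$ gives her payoff $0$. The second-price revenue is then $\max\{y,S_2\}\geq y\geq x$, where $S_2$ is the runner-up opposing bid. If the second-price auction allocates, the good goes to the opponent bidding $S$. If the first-price auction allocates, its revenue $\max\{x,B\}$ must exceed a number $\geq x$, so $B>x$ and an opponent wins there too.

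If instead $y>S$, her payoff is $v-S$ when $S>\max\{x,B\}$ and $(v-x)\mathbf{1}\{x>B\}$ when $S<\max\{x,B\}$; neither depends on $y$. So the two actions differ only when $S$ lies between $s$ and $v$. There $S>\min\{s,v\}\geq x$, so the topping bid earns $(v-S)\mathbf{1}\{S>B\}$, because $S<\max\{x,B\}$ forces $B>S>x$. This is $\geq 0$ when $S<v$ and $\leq 0$ when $S>v$, while the non-topping bid earns $0$. This is the content of the paper's one-line localization, and it is where $s>x$ is used.

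Your remark that the first-price win ``was worth $v-x$ only if $B$ is large'' has the direction reversed. She wins the first-price auction only when $B<x$, and in the relevant region that win is always pre-empted by the second-price revenue.

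For the second claim, the paper does not prove a strict improvement. It passes to the undominated representative, replacing $w(v)$ by $v$ on $P$; this is licensed by the first part with $x=b(v)<w(v)$ and $b(v)\leq v$. Your proposed upgrade to strictness does not follow as stated. You assert that opponents' second-price bids land between $w(v)$ and $v$, and above all first-price bids, with positive probability ``using the absolute continuity of $F$''. That event concerns the distribution of $w(V)$ relative to $b(V)$, which can have atoms or gaps (for instance, pooled second-price bids) even though $F$ is atomless. You would need a genuine argument there, possibly using the tie-breaking rule, or you should adopt the paper's normalization.
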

\begin{lemma}\label{lemma:b_above_lower}
For almost all types $v$, the agent participates in the first-price auction.
\end{lemma}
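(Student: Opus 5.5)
We argue by contradiction. Suppose there is a set $N$ of types of positive $\mu$-measure on which the agent does not participate in the first-price auction, meaning $b(v)=\underline v$ (no participation is payoff-equivalent to bidding the lower bound). By Lemma~\ref{lemma:truthful_second_coordinate}, almost every $v\in P$ bids $w(v)=v$ in the second-price channel. Since $b(v)=\underline v<w(v)$ for every $v>\underline v$ in $N$, almost all of $N$ lies in $P$. Hence almost every type in $N$ relies entirely on a truthful second-price bid.

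The idea is the same as in the single-homing proof of Theorem~\ref{SPUnraveling}: look near the bottom of $N$. Let $\alpha:=\operatorname{ess\,inf} N$ and pick $v=\alpha+\varepsilon\in N$ with $\varepsilon$ small. The profile $(\underline v, v)$ can win only through the second-price auction. That auction produces the winning revenue only if its second-highest bid exceeds the highest first-price bid. A type whose second-price bid is truthful gains only when it is the top bid there, and it then pays the second-highest second-price bid $t\ge$ every first-price bid. Since the second-price winner needs a competitor bidding in $[t,v]$, its surplus is at most $v-t$.

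We split on where $t$ lies.
\begin{itemize}
\item If $t<\alpha$, the profile $(\underline v,v)$ can still win. We then compare it with the deviation $(x,v)$, where $x$ is a first-price bid just below $\alpha$ (or equal to $\underline v$ if $\alpha=\underline v$). Adding a first-price bid below the second-price price does not change outcomes in which the second-price auction wins. It adds wins in which the first-price auction is decisive: for instance, when all opponents bid below $x$ in both channels, the type wins at price $x$ with surplus $\ge\varepsilon$.
\item If $t\ge\alpha$, the gain is at most $\varepsilon\cdot(n-1)\mu([\alpha,\alpha+\varepsilon])$, which is of order $\varepsilon\,\mu([\alpha,\alpha+\varepsilon])$.
\end{itemize}

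For the deviation itself, we bid $x=\alpha$ in the first-price auction, keeping $w=v$ or dropping it to $\underline v$. This wins whenever all opponents have first-price bids below $\alpha$ and second-price bids below $\alpha$. Opponents with values below $\alpha$ satisfy this almost surely, because $b(u)<u$ and, for $u\in P$, $w(u)=u$. For $u\notin P$, we have $w(u)\le b(u)<u$. The deviation payoff is therefore at least $\varepsilon\,\mu([\underline v,\alpha])^{n-1}$, with price at most $\alpha$. As $\varepsilon\to0$, the factor $\mu([\alpha,\alpha+\varepsilon])$ vanishes, so the deviation payoff strictly dominates whenever $\alpha>\underline v$. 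This contradicts equilibrium on a positive-measure subset of $N$.

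The main obstacle is the case $\alpha=\underline v$, where both bounds vanish. There we mimic the first step of the proof of Theorem~\ref{SPUnraveling}. We bound the second-price payoff by $(v-t^*)$ times the probability that some opponent is in $N\cap[\underline v,v]$, which gives a factor $(n-1)\mu([\underline v,v]\cap N)$. A first-price bid $t^*$ instead wins against every configuration in which the relevant opponents are below $t^*$, with an additional factor like $\mu(N)$, exactly as in the term-by-term comparison of (1) and (2). For small $v$ we get $(n-1)\mu([\underline v,v]\cap N)<\mu(N)$, so low types in $N$ strictly prefer to enter the first-price auction. This contradicts $\alpha=\underline v$ and reduces us to the previous case.

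Care is needed because opponents in $P$ may also place first-price bids. When counting the events in which a bid of $t^*$ wins, we must use the monotonicity assumption on $b$ from Theorem~\ref{Essential}, which lets us write the events as $b^{-1}([\underline v,t^*])$.
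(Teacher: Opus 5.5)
\textbf{There is a genuine gap in your main case.} You anchor at $\alpha=\operatorname{ess\,inf}N$. However, the price that a type $v=\alpha+\varepsilon\in N$ pays in the second-price auction is set by an opponent in $P$, not necessarily in $N$. A type $u\in P\setminus N$ bids $b(u)\in(\underline v,u)$ in the first-price auction and $w(u)=u$ in the second-price auction, and nothing in your contradiction hypothesis rules out such types having positive mass below $\alpha$.

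Suppose such an opponent sets the second-price revenue at some $t<\alpha$ while all other opponents are below $t$ in both channels. On that positive-probability event your type wins at price $t$ and earns $v-t>\alpha-t$. So this part of her equilibrium payoff does not vanish as $\varepsilon\to0$, whereas your deviation gain $\varepsilon\,\mu([\underline v,\alpha])^{n-1}$ is $O(\varepsilon)$ and cannot dominate it.

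Your first bullet does not close this. Adding a first-price bid $x$ just below $\alpha$ leaves an outcome unchanged only when $x<t$. When $t<x$, it turns a second-price win at price $t$ into a first-price win at price $x$, which strictly lowers her surplus. Your $\alpha=\underline v$ step has the same defect: the bound on second-price wins must use $\mu([\underline v,v]\cap P)$, not $\mu([\underline v,v]\cap N)$. The analogy with Theorem~\ref{SPUnraveling} fails exactly here. Under single-homing, the set $E$ of second-price participants is also the set of possible price-setters; under multi-homing, the price-setters range over all of $P$.

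\textbf{How the paper avoids this.} It anchors at $\alpha=\operatorname{ess\,inf}P$ and uses a deviation gain that stays bounded away from zero. Set $\varphi(x)=\Pr[b(V)\le x,\,w(V)\le x]$ and let $t^*$ maximize $(v-t)\varphi(t)^{n-2}$ over $[b(v),v]$. Then the equilibrium payoff of $v\in P$ is at most
$(v-t^*)\varphi(t^*)^{n-1}+(v-t^*)(n-1)\varphi(t^*)^{n-2}\Pr[V\in P\cap[b(v),v]]$.

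A first-price bid of $t^*$ matches the first term. It also wins whenever exactly one opponent lies in $P_0=N\cap P$ above $t^*$ and all others are below $t^*$ in both channels, because that opponent's lone second-price bid cannot set the second-price revenue. This contributes the factor $\Pr[V\in P_0\cap[v,\infty)]$, which is bounded away from zero, against $\Pr[V\in P\cap[b(v),v]]\to0$ as $v\downarrow\alpha$. The argument treats $\alpha=\underline v$ and $\alpha>\underline v$ uniformly and does not use monotonicity of $b$.

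\textbf{A second, smaller gap.} Your assertion that $b(v)=\underline v<w(v)$ on $N$ is unjustified. Types bidding $\underline v$ in both channels lie in $N\setminus P$. They need the separate observation that they earn zero, while some action $(x,\underline v)$ with $x\in(\underline v,v)$ earns a strictly positive payoff.
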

\begin{proof}
Observe that not participating in the first-price auction is payoff equivalent to bid $\underline v$, and so wlog, we consider
\begin{equation*}
    P_0:=\{v\in P:b(v)=\underline v\}
\end{equation*}
as the set of agents that do not participate in the first-price auction.
We first show that $\Pr[V\in P_0]=0$. By
Lemma~\ref{lemma:truthful_second_coordinate}, we may take $w(v)=v$ for almost
every $v\in P$.

Suppose, towards a contradiction, that $\Pr[V\in P_0]>0$, and let $\alpha:=\operatorname*{ess\,inf} P$.
Define
\begin{equation*}
    \varphi(x):=\Pr[b(V)\leq x,\ w(V)\leq x].
\end{equation*}
Thus $\varphi(x)^{n-1}$ is the probability that all opponents submit both bids
below $x$.

Fix $v\in P$. Since $w(v)=v$, the payoff from the equilibrium action is bounded
above by
\begin{equation*}
    U(v)
    \leq
    (v-b(v))\varphi(b(v))^{n-1}
    +
    \int_{[b(v),v]\cap P}
        (v-t)(n-1)\varphi(t)^{n-2}\,dF(t).
\end{equation*}
Let $t^*\in[b(v),v]$ maximize $(v-t)\varphi(t)^{n-2}$. Then
\begin{equation*}
    U(v)
    \leq
    (v-t^*)\varphi(t^*)^{n-1}
    +
    (v-t^*)(n-1)\varphi(t^*)^{n-2}
    \Pr[V\in P\cap[b(v),v]].
\end{equation*}

Now let type $v$ deviate by bidding $t^*$ in the first-price auction and by
making her second-price bid irrelevant. This deviation gives payoff at least
\begin{equation*}
    (v-t^*)\varphi(t^*)^{n-1}
    +
    (v-t^*)(n-1)\varphi(t^*)^{n-2}
    \Pr[V\in P_0\cap[t^*,\infty)].
\end{equation*}
Since $t^*\leq v$, we have
\begin{equation*}
    \Pr[V\in P_0\cap[t^*,\infty)]
    \geq
    \Pr[V\in P_0\cap[v,\infty)].
\end{equation*}
For $v\in P$ sufficiently close to $\alpha$, the term
$\Pr[V\in P\cap[b(v),v]]$ is arbitrarily small, whereas
$\Pr[V\in P_0\cap[v,\infty)]$ is bounded away from zero. Hence the deviation is
strictly profitable, a contradiction. Therefore
\begin{equation*}
    \Pr[V\in P_0]=0.
\end{equation*}

It remains to rule out types with $b(v)=\underline v$ and $w(v)\leq b(v)$.
Such types obtain zero payoff. But any type $v>\underline v$ can bid $(x,0)$
for some $x\in(\underline v,v)$ and obtain strictly positive payoff on the
event that all opponents' values are below $x$. Hence this cannot be a best
response. Therefore $b(v)>\underline v$ for almost every $v$.
\end{proof}
\begin{proof}[Proof of Theorem \ref{Essential}] By previous lemma and by assumption $b$ is globally strictly increasing.
We will show that $P=\emptyset$. Suppose towards contradiction that there exists $v\in P$. 
By Lemma \ref{lemma:truthful_second_coordinate}, $(b(v),v)$ is a best-reply. Now, consider the tuple-action $(v,x)$ for an agent $i$ with valuation $v$. His interim payoff is
\begin{equation*}
    \Pi(v,x)=(v-x)\Pr\left[B\leq x,W\leq x\right] + \mathbb E\left[(v-W)1\{W>\max\{B,x\}, W\leq v\}\right],
\end{equation*}
where
\[
    W=\max_{j\not=i}\{w(v_j)\}
    \qquad\text{and}\qquad
    B=\max_{j\not=i}\{b(v_j)\}.
\]
Now, we assume that $\Pi$ is differentiable in $(v,b(v))$\footnote{This assumption can be dropped by taking the left and right limits of $\Pi$ and computing them directly. For simplicity, we do not include this argument.}.the derivate of $\Pi$ with respect $x$ is
\begin{equation*}
    \frac{\partial \Pi(v,x)}{\partial x} = - \Pr\left[B\leq x,W\leq x\right]+ (v-x)f_{B}(x)\Pr[W\leq x\mid B=x].
\end{equation*}
Conditioned to $B=b(v)$, since $b$ is strictly increasing, we must have that, we must have that exists another $j\not=i$ such that $v_j=v$, so $W>b(v)$, and $\Pr[W\leq b(v)\mid B=b(v)]=0$, therefore,
\begin{equation*}
    \frac{\partial \Pi(v,b(v))}{\partial x}=- \Pr\left[B\leq b(v),W\leq b(v)\right].
\end{equation*}
However, $\Pr\left[B\leq b(v),W\leq b(v)\right]\geq \Pr[\max_{j\not=i}\{v_{j}\}\leq b(v)]=F(b(v))^{n-1}$ and since $b$ is strictly increasing, $F(b(v))^{n-1}>0$. Therefore, $\frac{\partial \Pi(v,b(v))}{\partial x}<0$. This contradicts the fact that bidding $b(v)$ in the FP is a best-reply by first-order conditions, so there is no $v\in P$. 
\end{proof}

When agents are allowed to use Sybil identities, the equilibrium analysis becomes more nuanced. The first-price only equilibrium is still an equilibrium even with Sybil-strategies. However, uniqueness can generally fail: For instance, an agent can submit the same bid through two identities in a second-price auction. If those two identities are the highest bidders, the agent pays her own bid, so the outcome effectively replicates a first-price bid.\\

\begin{remark}[Role of the monotonicity assumption] \normalfont
The monotonicity assumption in Theorem~\ref{Essential} rules out pooling in the
first-price coordinate. It is  used in the proof as follows:
when a competing bidder's first-price bid is marginally equal to $b(v)$, strict
monotonicity identifies the competitor's type as $v$, up to null sets. Hence, if
$v\in P=\{w(v)>b(v)\}$, the competitor's second-price bid is $w(v)=v>b(v)$, so a
marginal increase in the first-price bid cannot increase the probability of
beating both coordinates. Without monotonicity, several types could generate
the same first-price bid, and the marginal competitor at $b(v)$ need not have a
second-price bid above $b(v)$. In that case, the first-price coordinate could
serve only as a protective bid supporting a payoff-relevant second-price bid,
and the first-order argument proving $P=\emptyset$ would no longer apply. Thus
the theorem should be read as an essential uniqueness result within the
monotone class.
\end{remark}

\subsection{Sealed and open bidding}
Next, we look into the question whether some form of open bidding could emerge in an ePBS world. There are two fundamental and related questions to be answered:
\begin{enumerate}
\item In the presence of a credible sealed bid first-price bidding channel, can an alternative auction format with open bidding survive?
\item If in the first-price bidding channel (generally we think here of the in-protocol channel), the seller could choose to leak received bids, would he do so?
\end{enumerate}
The first question is the English auction version of the question we studied in the last section. The analysis in this section applies to both single-homing and multi-homing environments; the participation distinction is orthogonal to the disclosure question. The answer in the symmetric case with equally fast bidders, the English auction will unravel (or all bidders will bid in the ``last'' possible moment, essentially turning the auction into a sealed-bid first-price auction). In the asymmetric case, we will observe partial un-raveling: the slow bidder will self-select into the sealed-bidding channel, to avoid leaking information to bidders with a last look. The single- vs multi-homing distinction does not matter for these kind of results. The conclusion from these results is that in the presence of a credible sealed-bid first-price bidding channel, the overall market is pushed in one way or another to a sealed-bid first-price auction architecture.

The answer to the second question will add important qualifications to the conclusion drawn from the answer to the first question: whether or not the seller has an incentive to keep the first-price bidding channel sealed depends on how many bidders have a last look advantage. In the symmetric case, as well as in the case of one fast bidder with last look, the seller has no incentive to leak bids. In the case, of at least two fast bidders, it generally has an incentive to leak bids. 
Practically, we could imagine different ways of seller bid leakage in ePBS, e.g. the proposer could forward bids to a relay that integrates it into their bid stream.

However, if the seller can credibly commit to not leak information (e.g. through reputation, TEE commitments or delegation to an intermediary such as a possible ``first-price sealed-bid relay'') it is optimal for him to do so.

We formalize the unraveling of open-bidding relays in Propositions~\ref{nolastlook} and~\ref{cor:open_relay_no_leakage}, and the seller's disclosure incentives (without and with commitment) in Propositions~\ref{1leak} and~\ref{prop:commitment_show}  and Theorem~\ref{2leak}.

\subsubsection{No last look}
\label{sec:english_without_last_look}

We begin with a deliberately relay-friendly idealization. The relay is allowed to run a fully converged English auction among its participants, but the openness of the auction is purely internal: after the relay selects its representative, the relay representative and the native-channel builders submit proposer-facing bids without observing each other's final bids. Thus the model removes any cross-channel last-look advantage. The purpose of this benchmark is to ask whether an English relay can be stable even in this favorable case. Proposition~\ref{nolastlook} shows that it cannot: in large markets, all builders joining the English relay is not an equilibrium.

Fix a first-stage participation profile. Let $S$ be the set of builders who choose the relay, and let $A=[n]\setminus S$ be the set of builders who choose the native first-price channel. Write $s=|S|$ and $m=|A|$.

If $s\geq 2$, the relay runs an internal English knockout auction. Let $Y_S$ be the highest value among relay participants and let $Z_S$ be the second-highest value. The internal auction selects the builder with value $Y_S$ and sets the internal English-auction price equal to $Z_S$. The relay winner then chooses a proposer-facing bid $b_S$. Since bidding below $Z_S$ would not be consistent with the internal English outcome, while bidding above $Y_S$ is weakly dominated, we restrict attention to bids satisfying
\begin{equation*}
    Z_S\leq b_S\leq Y_S.
\end{equation*}
If $s=1$, the relay has no internal competition and therefore generates no English-auction information; the sole relay participant simply chooses a proposer-facing bid subject to the usual no-overbidding constraint.

Each builder $j\in A$ submits a native first-price bid $b_j$ to the proposer. The proposer compares the relay bid $b_S$, when present, with the native bids $(b_j)_{j\in A}$ and accepts the highest bid. The key modeling restriction is that the relay representative chooses $b_S$ without observing the native bids, and native builders bid without observing the internal relay state. Hence the English auction reveals information only within the relay and creates no last-look opportunity across channels.

To test stability of the all-English profile, let $U_E^n(F)$ be a builder's ex-ante payoff when all $n$ builders choose the relay, and let $U_{\mathrm{out}}^n(F)$ be the ex-ante payoff of a single builder who deviates to the native first-price channel while the other $n-1$ builders remain in the relay. The all-English profile can be a first-stage equilibrium only if
\begin{equation*}
    U_E^n(F)\geq U_{\mathrm{out}}^n(F).
\end{equation*}

The all-first-price profile is a Bayes--Nash equilibrium of the two-stage game: a unilateral deviation to the relay creates no internal relay competition and hence no informational advantage, so the deviation is payoff-equivalent to remaining in the native first-price channel. The all-English profile is less robust. For compactly supported distributions with non-flat upper tails, full participation in the English relay is unstable once the number of builders is sufficiently large.

\begin{proposition}[Asymptotic instability of all-English]\label{nolastlook}
There exists $N(F)<\infty$ such that, for every $n\geq N(F)$, the all-English
profile is not a first-stage equilibrium.
\end{proposition}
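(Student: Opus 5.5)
We show that for large $n$ the single deviator strictly gains, i.e.\ $U_{\mathrm{out}}^n(F)>U_E^n(F)$. The idea is that the relay representative faces no outside competition when all builders are in the relay, so any bid $b_S\in[Z_S,Y_S]$ wins. Each builder's payoff is therefore at most the English/second-price payoff. A deviator instead faces a single opponent, the relay representative, who bids close to the second-highest of $n-1$ values. The plan has three steps: bound $U_E^n$ from above, bound $U_{\mathrm{out}}^n$ from below by exhibiting one simple deviation strategy, and compare the rates as $n\to\infty$.

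\textbf{Step 1 (upper bound on $U_E^n$).} In the all-English profile there are no native bids, so the representative optimally bids $b_S=Z_S$, the lowest admissible bid. Any other choice only lowers builder surplus, so the upper bound holds for any admissible selection. Ex-ante builder surplus is then $\frac1n\mathbb E[Y_{(1)}-Y_{(2)}]$, where $Y_{(1)},Y_{(2)}$ are the top two order statistics of $n$ i.i.d.\ draws from $F$. Since the support is compact, $\mathbb E[Y_{(1)}-Y_{(2)}]\to0$. Hence $U_E^n(F)\le \frac1n\,\mathbb E[Y_{(1)}^n-Y_{(2)}^n]=o(1/n)$.

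\textbf{Step 2 (lower bound on $U_{\mathrm{out}}^n$).} Among the remaining $n-1$ builders, the representative has value $Y$, the maximum of $n-1$ draws, and bids some $b_S\le Y$. Let the deviator use the simple strategy ``bid $x$ in the native channel'' for a fixed threshold $x<\overline v$. This wins whenever $Y<x$, which is at least as often as $b_S<x$. It is also at least as often as the probability that the maximum of the other $n-1$ values is below $x$. The deviator's best response does at least this well, and a cleaner bound comes from choosing $x$ depending on the deviator's own value $v$:
\[
U_{\mathrm{out}}^n\ \ge\ \mathbb E_v\Big[\max_x (v-x)F(x)^{n-1}\Big].
\]
This is exactly the payoff of a bidder in a symmetric-like first-price auction with $n$ bidders, obtained from the trivial bound that the opponent bids his full value. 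That bound is too weak, because by revenue equivalence it is of the same order as $U_E^n$. We must therefore exploit $b_S\le Y$ more sharply, through the fact that $b_S\le Y$ while the representative is often forced near $Z$.

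The key is this. Since the deviator's native bid is unobserved, the representative best-responds to the deviator's bid distribution. Once $Z_S$ is high, the constraint $b_S\ge Z_S$ binds, and the representative cannot shade below $Z_S$. The deviator then faces an opponent whose bid is at least the second-highest of $n-1$ values, not the highest. This looks like a disadvantage, but the comparison is what matters: the relay extracts $Z_S$ from its winner. So I would rather directly compare the aggregate quantities
\[
nU_E^n=\mathbb E[Y_{(1)}-Y_{(2)}]\quad\text{vs.}\quad U_{\mathrm{out}}^n\ \ge\ \Pr[v>Y]\cdot\mathbb E\big[v-b_S \,\big|\, v>Y\big].
\]
Take the deviator bid $x=Y$-independent threshold $\overline v-\delta_n$, chosen so that $\Pr[Y<\overline v-\delta_n]$ and the margin balance. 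Under the non-flat upper tail, $1-F(\overline v-\delta)\gtrsim \delta^{k}$. A deviator with value near $\overline v$ then earns roughly $\delta_n F(\overline v-\delta_n)^{n-1}$, which gives a payoff of order $\delta_n\cdot(1-F(\overline v-\delta_n))$ with $\delta_n\sim$ the tail scale $n^{-1/k}$. For the all-English profile, each builder gets $\frac1n\mathbb E[\text{spacing}]\sim \frac1n n^{-1/k}$.

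\textbf{Step 3 (comparison, main obstacle).} The hard step is making the deviation gain strictly exceed $U_E^n$ uniformly in whatever admissible $b_S$ the representative selects. The danger is that both quantities are of order $n^{-1-1/k}$, so the conclusion hinges on constants. I expect to handle this by using $b_S\le Y$ and $b_S\ge Z$ together, conditioning on the event that the deviator is the overall highest value, and showing that her surplus $v-b_S\ge v-Y$ exceeds the relay-winner surplus $Y_{(1)}-Y_{(2)}$ by a constant factor asymptotically. This would follow from the extreme-value limit of normalized top spacings, in the Weibull domain given by the non-flat tail. Concretely, I would pass to the extreme-value limit and reduce the comparison to an explicit inequality between Weibull spacing expectations. $N(F)$ is then the point beyond which the limiting strict inequality holds.
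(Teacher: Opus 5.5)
Your skeleton is the paper's: bound $U_E^n$ by the English spacing, bound $U_{\mathrm{out}}^n$ from below with an explicit native deviation, and compare constants at scale $n^{-2}$. Step 1 is fine. Steps 2--3, however, contain a genuine gap.

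\textbf{Your lower bounds cannot work.} Every lower bound you actually write uses only $b_S\le Y_S$ (the relay maximum), i.e.\ a win probability $F(x)^{n-1}$. Such bounds can never give the result. In the symmetric $n$-bidder first-price equilibrium $\beta$, a bidder bidding $x$ wins with probability $\Pr[\max_{j\ne i}\beta(V_j)\le x]\ge F(x)^{n-1}$, since $\beta(w)\le w$. Revenue equivalence then gives
\[
\mathbb E_v\Big[\max_x\,(v-x)F(x)^{n-1}\Big]\;\le\;U^n_{\mathrm{FP}}\;=\;U_E^n\qquad\text{for every } n,
\]
where $U^n_{\mathrm{FP}}$ is a bidder's ex-ante payoff in that auction. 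With $f(\overline v)>0$ the scaled constants are $e^{-1}$ versus $1$. So this is not a case of equal orders with undecided constants: the comparison goes the wrong way, and your threshold bid $\overline v-\delta_n$ inherits the problem.

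The repair in Step 3 fails as well. The native channel is pay-as-bid, so on winning the deviator earns $v-x$ for her own sealed bid $x$, not $v-b_S$; she never observes $b_S$. Even if she could pay $Y_S$ on the event that she is the overall highest, $v-Y_S$ would be exactly the top spacing $Y_{(1)}-Y_{(2)}$. That reproduces $U_E^n$ with no constant-factor gain. The constraint $b_S\ge Z_S$ can only hurt her, so it cannot help a lower bound either.

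\textbf{The missing idea} is the reverse inequality near the top. There the representative's bid is pinned to the English floor, $b_S\approx Z_S$, so the deviator only has to clear the \emph{second}-highest relay value. The paper gets this by bounding the representative's voluntary bid away from $\overline v$. A direct argument is also available. Raising from $Z_S$ to $Z_S+\Delta$ gains win probability at most $1-F(Z_S)$, because the native bidder does not overbid, at margin at most $Y_S-Z_S$. The increment is paid on a win probability of at least $F(Z_S)$. So an optimal representative raises by less than $(Y_S-Z_S)(1-F(Z_S))/F(Z_S)$, which is $O(n^{-2})$ when $Z_S$ lies in the top $O(1/n)$ quantile.

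A native bid $F^{-1}(1-x/n)$ then wins with probability
\[
\Big(1-\frac{x}{n}\Big)^{n-1}+(n-1)\frac{x}{n}\Big(1-\frac{x}{n}\Big)^{n-2}\;\longrightarrow\;e^{-x}(1+x).
\]
The extra term $xe^{-x}$ is the event that exactly one relay value exceeds her bid. A representative bidding its value would beat her there, but it is stuck at a floor below her bid.

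Let type $F^{-1}(1-t/n)$ bid $F^{-1}(1-x(t)/n)$, with $t=x-1-1/x$ (the first-order condition of $(x-t)e^{-x}(1+x)$). This gives $\liminf_n n^2f(\overline v)\,U^n_{\mathrm{out}}\ge\rho=\int_{\varphi}^{\infty}e^{-x}(1+x)(1+1/x)(1+1/x^2)\,dx\approx1.18$, while $n^2f(\overline v)\,U_E^n\to1$. The margin is modest, and the computation uses $f(\overline v)>0$. Your Weibull-index-$k$ framing would require redoing this constant comparison with margins $x^{1/k}-t^{1/k}$.
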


Therefore, even in the idealized no-last-look benchmark, the relay cannot rely on single-homing alone to make an English format stable against a native sealed first-price alternative. The next question is what happens when the relay or proposer can create a genuine last-look opportunity by disclosing early bids.
\subsubsection{Last look}
\label{sec:last_look_no_commitment}

We now move to the more realistic case in which a subset of builders is latency-advantaged. These builders can submit a final response after observing information released close to the auction deadline; in this sense, they have the technological ability to obtain \emph{last look}. Last look is not automatic, however. It matters only if the seller, the relay, or the auction format reveals relevant bid information
before the deadline.

We first consider the case in which the seller, interpreted here as the block proposer, cannot credibly commit ex-ante to a disclosure strategy. As highlighted in Footnote~4, this does not exclude the possibility that the seller chooses their strategy ex-ante, but it forces them to choose a (Nash) equilibrium strategy. Let $L\subseteq[n]$ denote the set of builders who are technologically able to respond to late information before the auction closes. A builder in $L$ does not automatically have last look; it obtains last look only if relevant information is revealed before the deadline. In the model below, we write $S\subseteq L$ for the builders who may receive such disclosures.

The game has three stages. First, the builders in $[n]\setminus S$ submit bids to the seller. Second, after observing these bids, the seller chooses a verifiable message $\sigma_i$ to send to each bidder $i\in S$, where $\sigma_i$ consists of a collection of bid pairs $\{(j,b_j)\}$ corresponding to an arbitrary subset of $[n]\setminus S$. Third, the bidders in $S$ bid simultaneously. The block is allocated to the highest bidder under a pay-as-bid rule, with ties broken lexicographically in favor of bidders in $S$.\footnote{We could formulate a more general version of the game by adding a fourth stage in which the seller chooses which bid to accept. However, in any sequential equilibrium, the seller would choose the highest bid. Hence, without loss of generality, we restrict attention to the three-stage game.}

We say that an auction is \emph{$k$-leakage-resistant} if, for every set $S$ with $|S|=k$, in every sequential Bayes--Nash equilibrium the seller reveals no information in the second stage. We say that it is \emph{$k$-leaking} if for every set $S$ with $|S|=k$, and every sequential Bayes--Nash equilibrium, the seller leaks information with probability one.

\begin{figure}
    \centering
\begin{tikzpicture}[
    scale=1, transform shape,
    xshift=-4cm,
    >=Latex,
    bidder/.style={
      draw,
      rounded corners=2pt,
      minimum width=7mm,
      minimum height=5mm,
      align=center,
      font=\scriptsize,
      fill=white
    },
    block/.style={
      draw,
      thick,
      rounded corners=3pt,
      minimum width=16mm,
      minimum height=9mm,
      align=center,
      font=\scriptsize,
      fill=gray!10
    },
    group/.style={
      draw,
      dashed,
      rounded corners=4pt,
      inner sep=4pt
    },
    bid/.style={->, thick},
    info/.style={->, thick, dashed},
    every node/.style={font=\scriptsize}
]

\node[font=\bfseries] at (0.0,3.0)  {Stage 1};
\node[font=\bfseries] at (4.2,3.0)  {Stage 2};
\node[font=\bfseries] at (8.4,3.0)  {Stage 3};

\node at (0.0,2.6)  {$[n]\setminus S$ submit bids};
\node at (4.2,2.6)  {seller chooses $(\sigma_i)_{i\in S}$};
\node at (8.4,2.6) {$S$ bid simultaneously};

\node[bidder] (j1) at (0,1.3)   {$j_1$};
\node[bidder] (j2) at (0,0.2)   {$j_2$};
\node          (jd) at (0,-0.8) {$\vdots$};
\node[bidder] (jm) at (0,-1.8)  {$j_\ell$};

\node[group, fit=(j1)(j2)(jd)(jm),
      label={[xshift=-2mm]left:$[n]\setminus S$}] {};

\node[block] (seller) at (4.2,-0.2) {seller};

\draw[bid] (j1.east) to[out=0,in=175] (seller.west);
\draw[bid] (j2.east) to[out=0,in=180] (seller.west);
\draw[bid] (jm.east) to[out=0,in=185] (seller.west);

\node[bidder] (i1) at (8.2,1.3)   {$i_1$};
\node[bidder] (i2) at (8.2,0.2)   {$i_2$};
\node          (id2) at (8.2,-0.8) {$\vdots$};
\node[bidder] (ik) at (8.2,-1.8)  {$i_k$};

\node[group, fit=(i1)(i2)(id2)(ik),
      label={[xshift=2mm]right:$S$}] {};
\node[font=\tiny, above=1mm of i1] {simultaneous};

\draw[info] (seller.east) -- (i1.west);
\draw[info] (seller.east) -- (i2.west);
\draw[info] (seller.east) -- (ik.west);

\node[block, minimum width=26mm, minimum height=10mm]
(outcome) at (11.5,-0.2)
{\scriptsize The block is\\
 \scriptsize allocated to the highest bidder};

\draw[bid] (i1.east) -- (outcome.west);
\draw[bid] (i2.east) -- (outcome.west);
\draw[bid] (ik.east) -- (outcome.west);
\end{tikzpicture}
\caption{Last look vs First-price auction game representation.}
\end{figure}

\begin{proposition}\label{1leak}
The first-price auction is $1$-leakage-resistant.
\end{proposition}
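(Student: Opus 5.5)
Fix $S=\{i\}$ with a single fast bidder and argue by backward induction on the three-stage game. In stage 3, bidder $i$ knows its value $v_i$ and the message $\sigma_i$, a verifiable subset of the slow bids. The key observation is that the seller's payoff is the highest bid, and the only thing a disclosure can change is $i$'s stage-3 bid. We will show that, for every history of slow bids, any disclosure weakly lowers the seller's revenue relative to silence, and strictly lowers it whenever it changes $i$'s behavior. So in no sequential equilibrium does the seller reveal anything.

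\textbf{Step 1 (fast bidder's response to full information).} Suppose $\sigma_i$ reveals the maximal slow bid $M=\max_{j\neq i} b_j$, or a set containing it. Since ties go to $S$, $i$'s best reply is to bid exactly $M$ if $v_i\ge M$ and to lose otherwise. Revenue is then $M$ regardless of $v_i$, which is the least the seller can get, because every slow bid is already at least $M$.

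\textbf{Step 2 (partial disclosure is also bad).} If $\sigma_i$ reveals a subset whose maximum is $m<M$, verifiability lets $i$ condition on ``the hidden bids are some set of bids'' and on the seller's strategy. We need to argue that $i$ never bids more than under silence. Under the candidate equilibrium with no disclosure, $i$ faces the prior distribution of $M$. A revealed $m$ can only be followed by a posterior in which $M\ge m$. The cleanest approach is to use the skeptical-inference (unraveling) logic.
\begin{itemize}
\item The seller wants $i$ to believe $M$ is high, so that $i$ bids high.
\item Any revealed information gives a lower bound $M\ge m$, which informs $i$ and makes its bid track the revealed bid.
\item Against silence, $i$'s bid $\beta(v_i)$ either wins against $M$, so revenue is $\beta(v_i)\ge M$, or loses, so revenue is $M$.
\item Against disclosure, revenue is at most $\max(M,\text{new bid})$, and $i$ shades toward the revealed information.
\end{itemize}
Concretely, we show that revenue under silence, $\max(M,\beta(v_i))$, is pointwise at least revenue under any message. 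The fast bidder's bid after any message lies in $[m,\max(M,\cdot)]$, and it exceeds $M$ only when it wins, in which case $i$ would bid $M$ if it knew $M$. Silence leaves $i$ uncertain and hence bidding $\beta(v_i)$ with positive overbidding relative to $M$. So silence weakly dominates, and strictly dominates on the event where $i$ wins and $\beta(v_i)>M$.

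\textbf{Step 3 (equilibrium consistency and strictness).} We then need to rule out equilibria in which the seller discloses with positive probability. Suppose the seller discloses on some set $H$ of slow-bid histories. Then $i$'s beliefs after silence are the prior conditioned on $H^c$. We must show the seller still strictly prefers silence on $H$, with a positive-measure set of $v_i$ on which $i$ wins above $M$. This is the main obstacle: off-path and on-path beliefs after silence can be manipulated by the seller's strategy, so $\beta$ is endogenous. The plan is to show that for any belief after silence with full support on values of $M$ consistent with the history, the optimal bid of a sufficiently high type $v_i$ is strictly above the true $M$ with positive probability. Meanwhile disclosure of a set containing the argmax collapses the price to exactly $M$. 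This yields a strict loss from disclosure on a positive-probability event, so the seller deviates to silence, contradicting the assumption that $H$ has positive probability. We will handle the boundary cases ($M=\overline v$, null sets) by the usual $F$-null identification, and conclude 1-leakage-resistance.
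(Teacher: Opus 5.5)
Your Step 1 is exactly the paper's proof. Disclosing the top stage-1 bid $r$ pins revenue at $r$, and silence gives $\max\{r,\beta_\emptyset(v_i)\}\ge r$. The paper then just asserts that the uninformed fast bidder bids above $r$ with positive probability, so expected revenue is strictly higher without leakage. What you add beyond this does not go through.

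In Step 2, the pointwise claim that silence beats \emph{every} message is false. After a partial message with maximum $m<M$, bidder $i$ does not know $M$, so the remark that it ``would bid $M$ if it knew $M$'' has no bearing. Its bid is a best response to a different posterior than $\beta_\emptyset$, and nothing orders the two. The message certifies $M\ge m$, so every type $v_i\ge m$ bids at least $m$ and may bid above $M$, while the silence bid can lie far below $m$. For instance, take values on $[0,1]$ and suppose silence is used only when $M\le 0.3$, so that $\beta_\emptyset\le 0.3$. At a history with slow bids $0.5$ and $0.51$, reveal only $0.5$ to a fast bidder whose posterior on the hidden bid is uniform on $[0.5,0.6]$. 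Type $v_i=1$ then bids $0.6$, giving revenue $0.6$ against the silence revenue $0.51$. Excluding partial disclosure therefore needs a genuine equilibrium argument linking the posterior after each message to the seller's rule. The unraveling heuristic you invoke does not transfer: full disclosure is the seller's \emph{worst} outcome at every history (revenue is always at least $M$), not a way for high types to separate.

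Step 3 cannot be completed, and not for technical reasons: the seller can be exactly indifferent on a positive-probability set of histories. The silence belief only charges histories at which the seller is silent, so it need not cover the true $M$ at a disclosing history. Suppose the seller is silent exactly on $\{M<\tau\}$. Bidding $\tau$ already beats every silent history (ties go to $S$), so $\beta_\emptyset\le\tau$. At every history with $M\ge\tau$, silence then yields $\max\{M,\beta_\emptyset(v_i)\}=M$, the same as disclosure. With skeptical beliefs after partial messages (the largest revealed bid is taken to be $M$, so revenue is again $M$), this threshold rule is sequentially rational.

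In the extreme case $n=2$, consider the following profile:
\begin{itemize}
\item the seller always reveals $b_j$;
\item the fast bidder bids $b_j$ whenever $v_i\ge b_j$;
\item off-path silence is met with the belief $b_j=\underline v$, so $i$ bids $\underline v$;
\item the slow bidder bids some $\hat b(v_j)\in\arg\max_b (v_j-b)F(b)$.
\end{itemize}
The seller earns exactly $b_j$ after either message, so this profile is sequentially rational and leaks with probability one. Hence the contradiction you aim for in Step 3 does not exist.

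What your Step 1 (equivalently, the paper's argument) actually establishes is weaker. Disclosing the top bid never strictly raises the seller's continuation revenue, and it causes a strict loss in expectation when the silence belief is the nondegenerate prior. The every-equilibrium version needs an additional convention, such as silence when indifferent, together with a separate argument for partial messages.
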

\begin{proof}
Let $S=\{i\}$ and let $r$ denote the highest bid submitted in stage 1. If the seller reveals $r$, then bidder $i$ can condition on it and, whenever $v_i\geq r$, win by bidding exactly $r$, since ties are broken in favor of $S$. Thus leakage yields revenue exactly $r$. If instead the seller reveals nothing, bidder $i$ must choose his bid without observing the realized value of $r$. Hence, with positive probability, his equilibrium bid is strictly above $r$, and on that event the seller receives strictly more than $r$. Therefore the seller's expected revenue is strictly higher under no leakage, so in any sequential equilibrium the seller does not leak.
\end{proof}

With a unique latency-advantaged builder, the seller's incentive is compatible with no disclosure: the first-price allocation is preserved, builders are ex-ante symmetric, and the mechanism weakens incentives to compete purely on latency rather than on order-flow acquisition or block construction. However with multiple latency-advantaged builders this is no longer true.

\begin{theorem}\label{2leak}
The first-price auction is \emph{$k$-leaking} for all $2 \leq k \leq n-1$.
\end{theorem}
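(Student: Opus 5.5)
The plan is to fix a set $S$ with $|S|=k\ge 2$ and an arbitrary sequential equilibrium, and to show that the event ``the seller sends no informative message'' has probability zero. First I would simplify the message space. Only $r:=\max_{j\notin S} b_j$ matters for payoffs: it is the only stage-1 quantity that enters either the allocation or the revenue. So I would argue that any message profile can be replaced, without changing continuation revenue, by the decision whether to disclose $r$ to the bidders in $S$. Disclosing bids other than the maximum, or disclosing to only some members of $S$, I would handle with the same comparison below, but I expect this reduction to need care. Under this reduction the seller's choice at each history is binary: disclose $r$ or withhold. Let $R$ be the set of stage-1 maxima at which the seller withholds in equilibrium, and suppose towards a contradiction that $\Pr[r\in R]>0$.

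Second, I would compute continuation revenue under disclosure. When $r$ is revealed to all of $S$, stage 3 is a first-price auction among $k\ge 2$ bidders with a publicly known reserve $r$, with ties broken in favour of $S$. I would invoke the standard symmetric first-price equilibrium with reserve: bidders with $v_i<r$ are irrelevant, and bidders with $v_i\ge r$ bid $\beta_r(v_i)\ge r$ and compete with each other. Unlike the single-bidder case of Proposition~\ref{1leak}, where disclosure let bidder $i$ pay exactly $r$, here the presence of a second informed fast bidder pushes bids strictly above $r$ on the event that at least two values exceed $r$. Call the resulting revenue $\Pi_D(r)$. The facts I need are that $\Pi_D(r)\ge r$, that $\Pi_D$ is continuous in $r$, and that it is strictly larger than $r$ whenever $r<\overline v$.

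Third, I would compute revenue under withholding. Because verifiable silence is itself informative, the fast bidders form a belief $G$ about $r$ supported on $R$. They then play a fixed bid profile $(\gamma_i(v_i))_{i\in S}$ that does not depend on the realized $r$, and the seller's revenue is $\Pi_N(r)=\mathbb E[\max\{r,\max_{i\in S}\gamma_i(v_i)\}]$.

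The main step, and the main obstacle, is a Milgrom-style unraveling argument. I would consider $\bar r:=\operatorname{ess\,sup} R$ and histories $r\in R$ close to $\bar r$. Against such a belief, a fast bidder with $v_i$ slightly above $\bar r$ trades off winning against the uncertain $r$ and against the other fast bidder. I would first try to show that, under withholding, this bidder shades below $r$ on a set of positive probability, so that revenue falls to $r$ on that set. Under disclosure, by contrast, the $k\ge 2$ competition keeps bids strictly above $r$. Together these would give $\Pi_D(r)>\Pi_N(r)$ near $\bar r$, which contradicts optimality of withholding there. Hence $R$ is null.

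If this endpoint comparison fails, I would fall back on a direct pointwise inequality. I would bound $\Pi_N(r)$ by the revenue of a first-price auction among $S$ whose bidders have less information than the disclosure game gives them, and then use $k\ge 2$ so that revealing the reserve only raises competing bids. I expect this pointwise comparison, which has to hold uniformly in $r$ and in the belief $G$, to be the delicate part of the argument.
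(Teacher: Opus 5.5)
Your outer skeleton is the paper's. You let $\bar r$ be the essential supremum of the stage-1 maximum conditional on silence, and show that disclosing $r$ (e.g. all stage-1 bids) to all of $S$ strictly beats silence at $r=\bar r$. You then extend this to $[\bar r-\delta,\bar r]$ by continuity, which works because the silent bid profile does not depend on $r$, so $\Pi_N$ is continuous. Since that interval has positive conditional probability, this contradicts sequential optimality.

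Two side remarks. First, the binary reduction of the message space is unnecessary, and as stated it is not true: disclosing to a strict subset of $S$, or disclosing non-maximal bids, changes continuation revenue. One strictly profitable deviation from silence is all you need. Second, you must establish $\bar r<\overline v$, otherwise both revenues equal $\bar r$ at the endpoint. The paper gets this by showing that stage-1 bids are bounded away from $\overline v$.

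The genuine gap is the endpoint comparison itself. Under silence, some types slightly above $\bar r$ shade below $r$. That yields a revenue loss only on profiles where every fast bidder's value lies in that shading region. On profiles where some fast bidder has a high value, the silent bid $\gamma(v)$ could a priori exceed the disclosure bid $\beta_{\bar r}(v)$, and those profiles carry most of the revenue. So ``together these would give $\Pi_D(r)>\Pi_N(r)$'' does not follow. What you need is $\beta_{\bar r}(v)>\gamma(v)$ for \emph{every} $v>\bar r$.

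The paper supplies exactly this. It first symmetrizes the silent continuation (Lemma~\ref{lemma:reduction}), so there is a single monotone $\gamma$ rather than your possibly distinct $\gamma_i$. It then sets $v^\star=\inf\{v:\gamma(v)\ge\bar r\}$. Your shading argument is what shows $v^\star>\bar r$, and it gives $\gamma<\bar r<\beta_{\bar r}$ on $(\bar r,v^\star)$. Above $v^\star$, a silent bid beats $R$ almost surely, so the silent continuation is an ordinary $k$-bidder first-price auction with boundary value $\gamma(v^\star)=\bar r$. The first-order condition then gives
\[
\beta_{\bar r}(v)-\gamma(v)=\frac{(v^\star-\bar r)F(v^\star)^{k-1}-\int_{\bar r}^{v^\star}F(t)^{k-1}\,dt}{F(v)^{k-1}}>0,\qquad v\ge v^\star .
\]
In envelope terms, above $v^\star$ both games give winning probability $F(v)^{k-1}$. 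Silence, however, hands type $v^\star$ the rent $(v^\star-\bar r)F(v^\star)^{k-1}$, which is larger than the disclosure rent $\int_{\bar r}^{v^\star}F^{k-1}$. A larger rent at equal winning probability means a lower bid. If $v^\star=\infty$, the dominance is immediate.

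This is where $k\ge 2$ is used: for $k=1$ the numerator vanishes and $\beta_{\bar r}\equiv\bar r$.

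Your fallback, a comparison uniform in $r$ and $G$, is not needed and need not hold. For $r$ well below $\bar r$, bidders who under silence must guard against $r$ near $\bar r$ can bid above $\beta_r$ at high types. Only the inequality at the single point $\bar r$, plus continuity, is required.
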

\begin{proof}[Proof of the theorem] Let us fix a sequential Bayes--Nash equilibrium and write $s=|S|$. Let $R$ be the random variable
\begin{equation*}
    R:=\max\{b_j:j\in [n]\setminus S\},
\end{equation*}
that is, the highest bid among the bidders in the first stage. Let $m=\emptyset$ denote the event that the seller reveals no information at the second stage. Define
\begin{equation*}
    \overline{r}=\operatorname*{ess\,sup}(R\mid m=\emptyset).
\end{equation*}
Observe that first-stage bids are bounded away from $\overline v$. Fix $a\in (\underline v, \overline{v})$, let $q=F(a)^{n-1}$, and choose $\eta>0$ such that $(\overline{v}-\eta-a)q>\eta$. A bidder with value $v\geq \overline v - \eta$ can bid $a$ and win whenever all other values are below $a$, obtaining a payoff at least $(\overline v-\eta -a)q>\eta$. On the other hand, any bid $x>\overline v - \eta$ gives $v-x<\eta$. Hence no high type bids above $\overline v-\eta$, while lower types do not do so by no-overbidding. Therefore $R\leq \overline{v}-\eta$ almost surely, and so $\overline{r}<\overline{v}$.

\begin{lemma}\label{lemma:reduction}
Fix the continuation game among the bidders in $S$ after stage 2. Consider the two cases in which the seller either reveals the realized highest first-stage bid $r$ to all bidders in $S$, or reveals nothing. In both cases, every equilibrium outcome is outcome-equivalent to a symmetric Bayes--Nash equilibrium. Moreover, every symmetric mixed Bayes--Nash equilibrium is outcome-equivalent to a pure symmetric Bayes--Nash equilibrium.
\end{lemma}

\begin{proof}
Conditional on either public message, the bidders in $S$ are ex-ante identical, their values are still i.i.d. according to $F$, and the continuation game is a continuous first-price auction game with stochastic or static reserve price. Hence the symmetrization argument in \cite{chawla2013auctions} applies, so we may restrict attention to symmetric equilibria. The same result also implies that any symmetric mixed equilibrium is outcome-equivalent to a pure symmetric equilibrium. Since the seller's continuation payoff depends only on the induced distribution of bids and payments, this replacement is without loss of generality.
\end{proof}

By Lemma~\ref{lemma:reduction}, we may assume without loss of generality that the continuation bidding strategies are symmetric and pure.

We compare the continuation game after no disclosure with the continuation game in which the seller reveals the realized highest bid to all bidders in $S$. Let the corresponding symmetric pure bidding strategies be $b_{\emptyset}$ and $b_r$.

When the seller reveals $m=r$ to all bidders in $S$, the continuation game is the standard first-price auction with reserve $r$. Its symmetric equilibrium is unique, and each player has bidding strategy
\begin{equation*}
  b_r(v)= \begin{cases}
  v-\dfrac{\int_r^v F(t)^{s-1}dt}{F(v)^{s-1}},&v\geq r,\\
  0,&v<r.
  \end{cases}
\end{equation*}

Now consider the continuation game after no disclosure. Let
\begin{equation*}
    H(x)=\Pr[R\leq x \mid m=\emptyset]
\end{equation*}
be the cdf of $R$ conditional on $m=\emptyset$ and $b_\emptyset$ the bid function of the symmetric Bayes-Nash equilibrium with $m=\emptyset$. Define
\begin{equation*}
    v^\star=\inf\{v:b_{\emptyset}(v)\geq \overline{r}\}.
\end{equation*}
A first observation is that $v^\star>\overline{r}$. Indeed, in a first-price auction bidding above one's value is weakly dominated, so $b_{\emptyset}(v)\leq v$ for every $v$, which already implies $v^\star\geq \overline{r}$. If $v^\star=\overline{r}$, then by continuity we would have $b_{\emptyset}(\overline{r})=\overline{r}$. But then a bidder of type $\overline{r}$ obtains payoff $0$ from bidding $\overline{r}$, while by shading slightly below $\overline{r}$ she still beats $R$ with positive probability under $m=\emptyset$ and wins against the other bidders in $S$ with positive probability, yielding strictly positive payoff. This contradicts optimality, so $v^\star>\overline{r}$.

\textbf{Case 1: $v^\star<+\infty$.} For all $v\geq v^\star$, equilibrium bids satisfy $b_{\emptyset}(v)\geq \overline{r}$, so such bids beat the first-stage maximum whenever $m=\emptyset$. Hence, if a bidder of type $v$ deviates to the bid of a type $z\geq v^\star$, her payoff is
\begin{equation*}
    (v-b_{\emptyset}(z))F(z)^{s-1}.
\end{equation*}
Optimality at $z=v$ gives
\begin{equation*}
    \frac{d}{dz}\Big((v-b_{\emptyset}(z))F(z)^{s-1}\Big)\bigg|_{z=v}=0,
\end{equation*}
that is,
\begin{equation*}
    -b_{\emptyset}'(v)F(v)^{s-1}+(v-b_{\emptyset}(v))(s-1)F(v)^{s-2}f(v)=0.
\end{equation*}
Equivalently,
\begin{equation*}
    \frac{d}{dv}\Big((v-b_{\emptyset}(v))F(v)^{s-1}\Big)=F(v)^{s-1}.
\end{equation*}
Using the boundary condition $b_{\emptyset}(v^\star)=\overline{r}$, we obtain
\begin{equation*}
    b_{\emptyset}(v)=v-\frac{(v^\star-\overline{r})F(v^\star)^{s-1}+\int_{v^\star}^v F(t)^{s-1}dt}{F(v)^{s-1}}.
\end{equation*}

We now compare the two bid functions. For every $v\geq v^\star$,
\begin{equation*}
    b_{\overline{r}}(v)-b_{\emptyset}(v)=\frac{(v^\star-\overline{r})F(v^\star)^{s-1}-\int_{\overline{r}}^{v^\star} F(t)^{s-1}dt}{F(v)^{s-1}}.
\end{equation*}
Since $F$ is strictly increasing, we have $F(t)^{s-1}<F(v^\star)^{s-1}$ for all $t\in[\overline{r},v^\star)$, and therefore
\begin{equation*}
    \int_{\overline{r}}^{v^\star} F(t)^{s-1}dt<(v^\star-\overline{r})F(v^\star)^{s-1}.
\end{equation*}
Hence
\begin{equation*}
    b_{\overline{r}}(v)>b_{\emptyset}(v)\qquad \text{for all } v\geq v^\star.
\end{equation*}
Also, if $v\in(\overline{r},v^\star)$, then by definition $b_{\emptyset}(v)<\overline{r}$, whereas
\begin{equation*}
    b_{\overline{r}}(v)=v-\frac{\int_{\overline{r}}^v F(t)^{s-1}dt}{F(v)^{s-1}}>\overline{r}.
\end{equation*}
Thus
\begin{equation*}
    b_{\overline{r}}(v)>b_{\emptyset}(v)\qquad \text{for all } v>\overline{r}.
\end{equation*}

\textbf{Case 2: $v^\star=+\infty$.} In this case, for all $v\in[\underline{v},\overline{v}]$, $b_{\emptyset}(v)<\overline{r}$. However, for $v\in[\overline{r},\overline{v}]$, $b_{\overline{r}}(v)\geq \overline{r}$, and therefore $b_{\overline{r}}(v)>b_{\emptyset}(v)$.

Let $v_i$ denote the valuation of bidder $i\in S$. The seller's continuation revenue after revealing $r$ is
\begin{equation*}
    \Pi_{\mathrm{rev}}(r)=\mathbb{E}\Big[\max\{r,\max_{i\in S} b_r(v_i)\}\Big],
\end{equation*}
whereas after no disclosure it is
\begin{equation*}
    \Pi_{\emptyset}(r)=\mathbb{E}\Big[\max\{r,\max_{i\in S} b_{\emptyset}(v_i)\}\Big].
\end{equation*}
From the pointwise comparison above, for every valuation profile $(v_i)_{i\in S}$,
\begin{equation*}
    \max\{\overline{r},\max_{i\in S} b_{\overline{r}}(v_i)\}\geq \max\{\overline{r},\max_{i\in S} b_{\emptyset}(v_i)\},
\end{equation*}
and the inequality is strict whenever $\max_{i\in S}v_i>\overline{r}$. Since $\overline{r}<\overline{v}$ thus
\begin{equation*}
    \Pr\Big[\max_{i\in S}v_i>\overline{r}\Big]=1-F(\overline{r})^s>0.
\end{equation*}
Therefore
\begin{equation*}
    \Pi_{\mathrm{rev}}(\overline{r})>\Pi_{\emptyset}(\overline{r}).
\end{equation*}

The seller's continuation revenue is continuous in $r$ in both cases. Hence there exists $\delta>0$ such that
\begin{equation*}
    \Pi_{\mathrm{rev}}(r)>\Pi_{\emptyset}(r)\qquad \text{for all } r\in[\overline{r}-\delta,\overline{r}].
\end{equation*}
By definition of $\overline{r}$ as the essential supremum of $R$ conditional on $m=\emptyset$,
\begin{equation*}
    \Pr[R\in[\overline{r}-\delta,\overline{r}]\mid m=\emptyset]>0.
\end{equation*}
Hence, on a positive-probability set of histories at which the equilibrium prescribes no disclosure, the seller would strictly prefer to reveal the realized highest bid. This contradicts sequential optimality. Therefore the seller cannot choose $m=\emptyset$ with positive probability.
\end{proof}

The previous theorem shows that the allocation induced by the three-stage game is not equivalent to the allocation in the standard first-price auction. Since the first-price auction is welfare-optimal with symmetric i.i.d. bidders, the three-stage game is not welfare-optimal. Moreover, when value distributions are regular, the first-price auction is revenue-optimal within the class of auctions without a reserve price (see Proposition \ref{prop:commitment_show}). Thus allowing the seller to leak information cannot improve on the no-disclosure benchmark; the theorem implies that revenue is strictly lower. Also, information disclosure can have an adverse effect on competition
in the block-building market by increasing market concentration. As we show
in the following proposition, disclosure raises the payoff of agents with an
informational advantage, thereby strengthening their position relative to
less-informed agents.

\begin{proposition}[Last-look rents]\label{lastlookrents}
Suppose the seller reveals the highest first-stage bid to every bidder in $S$. Then, in any symmetric equilibrium, every bidder in $S$ obtains a strictly larger ex-ante utility and probability of winning than every bidder in $[n]\setminus S$.
\end{proposition}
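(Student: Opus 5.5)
The plan is a \emph{pathwise dominance} argument: a fast bidder facing the revealed reserve $r$ can always imitate the winning strategy of a slow bidder of the same type, and on a positive-probability set does strictly better.

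\textbf{Setup.} Fix a symmetric equilibrium of the three-stage game with full disclosure of $r$ to every bidder in $S$. By Lemma~\ref{lemma:reduction}, the continuation strategies of $S$ may be taken symmetric and pure. After observing $r$, the $S$-bidders play the first-price auction with reserve $r$. Its unique symmetric equilibrium is $b_r(v)$ for $v\geq r$, with the tie at $r$ going to $S$, and abstention for $v<r$. The slow bidders $j\in[n]\setminus S$ use a symmetric first-stage strategy $\beta$, and by no-overbidding $\beta(v)\leq v$.

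\textbf{Step 1: winning probabilities.} A slow bidder of type $v$ wins only if
\[
\beta(v)>b_R(v_i)\ \text{for every } i\in S,
\qquad\text{which requires}\qquad
v_i<\beta(v)\le v\ \text{for every } i\in S .
\]
The reason is that any $S$-bidder with $v_i\geq R\geq\beta(v)$ bids at least $R$ and wins the tie. A fast bidder of type $v\geq r$ wins whenever every other $S$-bidder has lower value, since $b_r$ is strictly increasing, regardless of what the slow bidders bid beyond setting $r$. I would compare ex-ante winning probabilities by symmetry. The winner is always the highest-value $S$-bidder if some $S$-value is at least $R$, and otherwise a slow bidder. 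Slow bidders only win on the event $\{\max_S v_i<R\}$, and even then only one of them. So I would show
\[
\Pr[\text{some } i\in S \text{ wins}]/|S| \;>\; \Pr[\text{some } j\notin S\text{ wins}]/(n-|S|).
\]
The route is to compare with the event ``$i$ has the highest value among all $n$ bidders''. Every $S$-bidder wins at least on that event, because then $v_i\ge R$. A slow bidder wins at most on that event, because $\beta(v_j)\le v_j$ and the $S$-bidders would otherwise cover it. The strict gap comes from the event where $i\in S$ is not the overall highest-value bidder but the slow high-value bidder shaded below $v_i$. This event has positive probability because $\beta(v)<v$ for $v>\underline v$ (shading is strict), which the argument in Theorem~\ref{2leak} already establishes via $R\le\overline v-\eta$.

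\textbf{Step 2: utilities via imitation.} For a fast bidder of type $v$, I would show that the interim payoff is at least that of a slow bidder of type $v$. The fast bidder can bid $\beta(v)$ whenever $r<\beta(v)$ and win in every state where the slow type would win. On those states the other slow bidders bid below $\beta(v)$ and the other $S$-bidders have values below $\beta(v)$, so they bid below it, and the fast bidder pays the same price. This gives weak dominance state by state. The fast bidder's actual equilibrium response $b_r(v)$ is optimal given $r$, so its payoff is at least this. Strictness comes from states with $r<v$ in which the slow bidder of type $v$ loses, for example because another slow bidder bids between $\beta(v)$ and $v$. There the fast bidder still earns $v-r>0$ against the remaining $S$-bidders with positive probability. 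Integrating over $v$ with the common prior $F$ yields the strict ex-ante inequality.

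\textbf{Main obstacle.} The delicate point is that the slow bidders' strategy $\beta$ is an arbitrary equilibrium object, possibly non-monotone. The argument must therefore use only $\beta(v)\le v$, the strict shading, and $\overline r<\overline v$, not a closed form. I would handle this by working entirely through the state-by-state imitation in Step~2. The positive-probability strict-gain events would be exhibited using only the facts that $R<\overline v-\eta$ almost surely and that $F$ has full support.
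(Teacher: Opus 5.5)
Your argument is correct in substance, but it takes a different route from the paper.

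\textbf{The paper's route.} The paper appeals to the Chawla--Hartline symmetrization to make both groups' strategies monotone. It then writes the interim winning probabilities in closed form, $X_A(v)=F(v)^{m-1}F(b_A(v))^{s}$ and $X_S(v)=F(v)^{s-1}\Pr[\max_{j\notin S}b_A(v_j)\le v]$, and sandwiches them around $F(v)^{n-1}$. The inequality is strict on the slow side because $b_A(v)<v$. Finally it integrates with the envelope formula $u(v)=\int_{\underline v}^{v}X(t)\,dt$. This gives a type-by-type ranking of winning probabilities with strictness for free. It depends on monotonicity of the slow strategy, which is what identifies the slow bidder's winning event with the other slow bidders having lower values.

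\textbf{Your route.} You use only no-overbidding and the optimality of $b_r$ at each revealed $r$. The group-level comparison with the event that the top value lies in $S$ gives the ex-ante winning-probability ranking. Replication gives $u_S(v)\ge u_A(v)$ for every type by revealed preference. So non-monotone or pooling slow strategies are harmless, and no envelope representation is needed.

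\textbf{Points to make explicit.}

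First, in Step 1 the individual claim that a slow bidder wins only when she has the overall highest value fails for non-monotone $\beta$. What your displayed inequality actually needs is the group statement $\Pr[\text{some } j\notin S \text{ wins}]\le (n-|S|)/n$, together with within-group symmetry. Strict shading for all $v>\underline v$ does not come from $R\le\overline v-\eta$, which only covers high types, though that also suffices. It follows directly: bidding one's value earns zero, while a bid $a\in(\underline v,v)$ earns at least $(v-a)F(a)^{n-1}>0$.

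Second, in Step 2 the fast and slow bidders face different opponent sets: $m$ slow and $s-1$ fast, versus $m-1$ slow and $s$ fast. So ``every state where the slow type would win'' requires a coupling.
\begin{itemize}
\item Relabel one of the slow bidder's fast opponents as the fast bidder's extra slow opponent.
\item On the slow bidder's winning event this opponent's value is below $\beta(v)$. As a slow bidder it therefore bids below $\beta(v)$, and the revealed $r$ stays at most $\beta(v)$.
\item Imitate on $r\le\beta(v)$ rather than $r<\beta(v)$, to absorb ties.
\end{itemize}

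Third, your strictness event needs $\Pr[\beta(v)<R<v]>0$ on a positive-measure set of types. Full support and $R\le\overline v-\eta$ do not give this by themselves. It does hold for $v$ slightly above $\overline r:=\operatorname{ess\,sup}R$.
\begin{itemize}
\item Such slow types must bid strictly below $\overline r$. Any bid at least $\overline r$ earns at most $v-\overline r$, whereas a fixed bid $a\in(\underline v,\overline r)$ earns at least $(\overline r-a)F(a)^{n-1}$.
\item Hence $\Pr[R>\beta(v)]>0$, and on that event $R\le\overline r<v$.
\item There the imitation abstains, while the fast bidder's continuation payoff is $\int_R^v F(t)^{s-1}\,dt>0$.
\end{itemize}
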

\begin{proof}
As in Lemma~\ref{lemma:reduction}, the symmetrization argument in \cite{chawla2013auctions} applies to the bidders in $A:=[n]\setminus S$. Therefore, we can reduce the analysis to the case in which bidders in $A$ use a monotone bidding strategy $b_A$ and bidders in $S$ use a monotone bidding strategy $b_S$. Let $X_A(v)$ and $X_S(v)$ denote the interim winning probabilities of a bidder with value $v$ in groups $A$ and $S$, respectively.

Almost surely, every player bids strictly below her valuation. In particular, $b_A(v)<v$. A bidder in $A$ with value $v$ wins only if the other bidders in $A$ have values below $v$ and the bidders in $S$ have values below $b_A(v)$. Hence
\begin{equation*}
    X_A(v)=F(v)^{m-1}F(b_A(v))^s\leq F(v)^{m-1}F(v)^s=F(v)^{n-1},
\end{equation*}
with strict inequality on a positive-probability set.

By contrast, a bidder in $S$ with value $v$ wins whenever the other bidders in $S$ have values below $v$ and the highest first-stage bid is at most $v$. Therefore
\begin{equation*}
    X_S(v)=F(v)^{s-1}\Pr\Big[\max_{j\in A} b_A(v_j)\leq v\Big].
\end{equation*}
Since $b_A(t)\leq t$, the event that all bidders in $A$ have values below $v$ implies $\max_{j\in A}b_A(v_j)\leq v$. Hence
\begin{equation*}
    X_S(v)\geq F(v)^{s-1}F(v)^m=F(v)^{n-1}.
\end{equation*}
Thus $X_S(v)>X_A(v)$ on a positive-measure set. By the envelope theorem,
\begin{equation*}
    u_S(v)=\int_{\underline v}^{v}X_S(t)\,dt
    >
    \int_{\underline v}^{v}X_A(t)\,dt
    =u_A(v),
\end{equation*}
with strict inequality on a positive-measure set. Taking expectations over $v\sim F$ gives the result.
\end{proof}

\subsubsection{Last-look with seller commitment}

The previous section studied the case in which the seller could not commit to a disclosure policy. In that environment, disclosure is chosen history by history after early bids have arrived. We showed that this creates an ex-post incentive to leak information when there are at least two latency-advantaged bidders: after observing the early bids, the seller can raise continuation revenue by revealing information to the bidders who are still able to respond.

We now ask a different question. Suppose instead that the seller can commit ex-ante to a particular auction rule, or equivalently can choose among relays that commit to different rules (and make credible that it will ignore or not-leak direct bids to them). One relay may run a sealed first-price auction, another may run an open English-style auction, and another may commit to giving last look to a specified set of builders. In this sense, the seller decision is over which auction format they prefer to pick ex-ante and commit to.

This distinction matters. Ex-post, after early bids have been observed, leaking information can be profitable. Ex-ante, however, the proposer must take into account that last look changes bidders' incentives and shifts surplus toward the latency-advantaged builders. The result below shows that, when values are i.i.d. from a regular distribution, this ex-ante effect dominates: among committed last-look rules, the proposer weakly prefers the sealed first-price auction.

We assume that the seller can choose among the following auction formats.
For a set $S\subseteq[n]$, define the auction that reveals bids from $[n]\setminus S$ to bidders in $S$ as follows:
\begin{enumerate}
    \item The set of builders $A := [n] \setminus S$ submits bids simultaneously.
    \item The bids submitted by $A$ are revealed to the builders in $S$.
    \item The builders in $S$ then submit bids simultaneously.
    \item The right to decide the content of the next block is allocated to the highest bidder under a pay-as-you-bid payment rule.
\end{enumerate}
Subsequently, we use the short-hand notation $\texttt{Show}(S)$ to denote the auction that reveals bids from $[n]\setminus S$ to bidders in $S$. 
Observe that the cases $S=\emptyset$ and $S=[n]$ are equivalent to the first-price auction. For the set of latency-advantaged parties $L\subseteq [n]$, one can view the case $S=L$ as the current status quo, i.e. an open first-price auction with last-look advantage for a set of parties. 

First, we assume that the block proposer can \textit{commit} to a specific auction format; that is, once they announce to whom they reveal bids to, they cannot deviate from it during its execution. 

\begin{definition}
A distribution $F$ is \emph{regular} if its virtual valuation function
\begin{equation*}
\phi(v) = v - \frac{1-F(v)}{f(v)}
\end{equation*}
is non-decreasing in $v$, where $f$ is the density of $F$.
\end{definition}
\begin{proposition}\label{prop:commitment_show} 
If $F$ is regular, then for any proper set $S\subseteq[n]$ and any
Bayes--Nash equilibrium of $\texttt{Show}\,(S)$,
\begin{equation*}
    \mathrm{Rev}(\mathrm{FP})\geq \mathrm{Rev}(\texttt{Show}\,(S)).
\end{equation*}
In other words, under commitment the block proposer weakly prefers the sealed
first-price auction.
\end{proposition}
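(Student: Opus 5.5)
The plan is to use Myerson's revenue characterization. Fix a Bayes--Nash equilibrium of $\texttt{Show}(S)$ and regard the whole extensive-form game (stage-1 bids by $A=[n]\setminus S$, revelation, stage-2 bids by $S$) as a direct mechanism via the revelation principle. Its outcome is an interim allocation rule $x_i(v_i)$ and an interim payment rule $p_i(v_i)$ for each bidder $i$. They satisfy Bayesian incentive compatibility and interim individual rationality, since any bidder can bid $\underline v$ or abstain and get zero. The standard envelope argument then gives
\[
\mathrm{Rev}(\texttt{Show}(S))=\mathbb E\Big[\sum_{i}\phi(v_i)\,\mathbf 1\{i\text{ wins}\}\Big]-\sum_i u_i(\underline v),
\]
with $u_i(\underline v)\ge 0$. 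So the revenue of $\texttt{Show}(S)$ is bounded above by the expected virtual surplus of its realized allocation. No symmetry or monotonicity of strategies is needed for this step, because BIC alone gives monotone interim allocations and the payment identity.

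The comparison is with the sealed first-price auction. In its unique symmetric equilibrium $b^{FP}$ is strictly increasing, so the good goes to the highest-value bidder and the lowest type pays nothing. Its revenue therefore equals $\mathbb E[\phi(v_{(1)})]$. Both mechanisms always allocate the item, since bidding a positive amount below value is always feasible; a non-allocation event would only lower the virtual surplus further and does not change the argument. Under regularity, $\phi$ is non-decreasing, so pointwise for every value profile
\[
\sum_i \phi(v_i)\mathbf 1\{i\text{ wins in }\texttt{Show}(S)\}\le \phi\big(\max_i v_i\big).
\]
Taking expectations and using $u_i(\underline v)\ge 0$ gives $\mathrm{Rev}(\mathrm{FP})\ge \mathrm{Rev}(\texttt{Show}(S))$.

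The hardest step is justifying the payment identity for an arbitrary, possibly mixed or non-monotone, equilibrium of a sequential game. The approach is to define interim quantities by integrating over opponents' types and over all randomization, including the second-stage strategies after every revealed bid vector. Each type's equilibrium interim payoff is then $u_i(v)=\max_{\text{strategies}}\{v\,x-p\}$ over the achievable pairs $(x,p)$, a maximum of affine functions of $v$. Hence $u_i$ is convex and absolutely continuous with derivative $x_i(v)$ almost everywhere, which yields the revenue formula. One subtlety is that the item is not sold when $\phi<0$ in neither mechanism, so FP may lose on low types. This is handled because the inequality is pointwise against $\phi(\max_i v_i)$, which is exactly FP's virtual surplus, and not against the Myerson optimum. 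Ties have probability zero and do not affect the argument.
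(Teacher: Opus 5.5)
Your proposal is correct and follows essentially the same route as the paper's proof. Myerson's payment identity with $u_i(\underline v)\ge 0$ bounds $\mathrm{Rev}(\texttt{Show}(S))$ by the expected virtual surplus of its allocation. Regularity together with $\sum_i x_i=1$ gives the pointwise bound by $\phi(\max_i v_i)$, and FP attains exactly this bound. Your envelope argument, taking $u_i(v)=\max\{vx-p\}$ over achievable pairs that do not depend on the bidder's own type, only makes explicit what the paper cites as ``revelation principle and Myerson's payment identity.''

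One aside needs correcting. A non-allocation event does not ``only lower the virtual surplus'': when $\phi(\max_i v_i)<0$, withholding the item raises it, which is precisely the reserve-price effect. So the pointwise bound genuinely needs $\sum_i x_i=1$. That condition holds because $\texttt{Show}(S)$ always awards the block to the highest bid, not because positive bids are feasible. If you also allow abstention, you must additionally argue that non-allocation has probability zero in equilibrium.
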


Suppose, finally, that the proposer commits not to leak bids submitted through
the sealed first-price channel, but still accepts bids coming from an
open-bid relay. In this case, a high-latency builder has no reason to send her
bid through the open relay. If she sends a bid $b$ through the relay,
low-latency builders may observe it and still have time to react. If she sends
the same bid $b$ through the sealed channel, the proposer compares it in the
same way, but the bid is not revealed before the deadline. Hence, whenever the
builder has a chance of winning, submitting through the relay can only reduce
that chance of winning.

\begin{proposition}[Open relays under no leakage]
\label{cor:open_relay_no_leakage}
Suppose the proposer commits not to leak bids submitted through the sealed
first-price channel, while still accepting bids from an open-bid relay. Then
high-latency builders bid through the sealed channel. Low-latency builders may
still use or observe the open relay, but the relay no longer gives them a
last-look advantage over sealed bids. The outcome is equivalent to the
standard sealed first-price auction.
\end{proposition}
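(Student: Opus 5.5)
The plan is to reduce the statement to a comparison of sealed submission against relay submission for a fixed high-latency builder, holding the opponents' strategies fixed. I will show that sealed submission weakly dominates relay submission, and strictly so whenever the builder's bid has positive probability of being pivotal. I then conclude that the induced game is a sealed first-price auction.

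\emph{Step 1: the pathwise comparison.} Fix a high-latency builder $i\notin L$ with value $v$ and a bid $b$. Fix the strategies of all other builders; low-latency builders' responses may depend on whatever the relay publishes. Couple the two scenarios on the same realization of values and randomness.
\begin{itemize}
\item If $i$ submits $b$ through the sealed channel, the committed proposer does not leak it. No one's information set changes with $b$, so the other bids form a random vector $\beta$ that does not depend on $b$.
\item If $i$ submits $b$ through the relay, low-latency builders additionally observe $b$ and may re-optimize to a profile $\beta'(b)$.
\end{itemize}
The proposer's comparison rule is identical across channels. So $i$ wins at price $b$ under sealed submission iff $b\geq\max\beta$, and under the relay iff $b\geq\max\beta'(b)$.

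\emph{Step 2: the main obstacle.} I must argue that observing $b$ cannot make a fast builder bid lower on events where it would otherwise have beaten $b$. The argument I would try first is a best-response argument. A fast builder $j$ with value $v_j$ who sees $b$ faces a first-price problem with a known lower competitor. If $v_j>b$, $j$ optimally bids at least $b$, with ties resolved in its favor as in Proposition~\ref{1leak}, and so still beats $i$. If $v_j\leq b$, $j$ never outbids $b$ without overbidding.

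Hence $\{b\geq\max\beta'(b)\}\subseteq\{b\geq\max\beta\}\cap\{\max_{j\in L}v_j\leq b\}$. The remaining work is the case where, without information, $j$ would have bid below $b$ despite $v_j>b$. This has positive probability in any equilibrium with shading, by the same reasoning as in Proposition~\ref{1leak}. It gives $i$ a strictly lower winning probability at the same price, so the relay is weakly and typically strictly worse. Since $i$'s payoff is $(v-b)\Pr[\text{win}]$ in both cases, sealed submission dominates.

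\emph{Step 3: equilibrium outcome.} Given Step 2, in equilibrium every high-latency builder bids sealed, up to null sets of types with zero winning probability. The relay then contains only low-latency bids, and low-latency builders learn nothing payoff-relevant from it: sealed bids are never revealed, and any information among fast builders themselves unravels by the Proposition~\ref{nolastlook} logic of bidding at the last moment. A fast builder therefore gains nothing from relay bidding over a sealed bid. All builders thus effectively submit simultaneous sealed pay-as-bid bids, which is the standard first-price auction. By the uniqueness of the symmetric first-price equilibrium used in Theorem~\ref{SPUnraveling}, the outcome coincides with the standard sealed first-price outcome.
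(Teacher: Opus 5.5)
Your proposal is correct and follows essentially the same route as the paper. Both arguments compare the same bid $b$ sent through the sealed channel versus the relay, at the same price. A fast builder who observes $b$ with $v_j>b$ outbids it, so relay-winning implies sealed-winning; strictness comes from first-price shading; and the relay then carries nothing payoff-relevant, so the game reduces to the sealed first-price auction.

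The one quibble is that Step 3's appeal to Proposition~\ref{nolastlook} for the fast builders is unnecessary, and that result only establishes instability of the all-English profile for large $n$. In the model, the fast builders bid simultaneously in the final stage, so once slow bids are sealed the relay has nothing left to reveal.
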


\section{Conclusion}\label{sec:conclusion}
We have analyzed competing auctions in intermediated markets where a single seller can route a single sale through multiple parallel mechanisms, with proposer-builder separation as the leading application. We found that second-price intermediary auctions unravel fully against a sealed first-price principal auction (Theorem~\ref{SPUnraveling} under single-homing, Theorem~\ref{Essential} under multi-plexing), open-bidding intermediaries unravel partially - collapsing into first-price under symmetric latency, sorting fast bidders to the intermediary under asymmetric latency (Propositions~\ref{nolastlook} and~\ref{cor:open_relay_no_leakage}) - and Sybil bidding can restore the second-price intermediary by allowing bidders to convert it into a first-price-equivalent mechanism. On the disclosure side, a first-price auction is leakage-resistant against a single privileged bidder (Proposition~\ref{1leak}) but leaks in equilibrium against two or more (Theorem~\ref{2leak}), and the optimal commitment under regularity is to share no information at all (Proposition~\ref{prop:commitment_show}).

The institutional reading of these results is that ePBS will likely reshape the relay market: An in-protocol sealed-bid first-price channel limits the design space for relays going forward. Relays running second-price or open-bidding mechanisms cannot sustain participation against the sealed first-price alternative. However, this prediction relies on the in-protocol channel being credibly sealed and the assumption of a non-leaking in-protocol bidding channel itself is fragile: The disclosure analysis identifies that a proposer receiving direct bids has an incentive to leak bids in several scenarios. Thus, proposers forwarding bids to a relay's open-bid feed is a possible outcome. On the other hand, there is value for proposers from credibly commitment to not
 leak bids. This could be created through  reputation mechanisms or solutions such as TEEs, in the case of stakers large enough to provide individual commitments (Lido and similar institutional stakers), or through intermediation in form of ``sealed-bid'' relays. Thus, there are several factors pushing relays to effectively move to sealed-bid first-price auction formats post ePBS. Still there is a potential role for them as a credible intermediary and for services such as fast bid and block propagation as well as non-collateralized bidding. There is also, in principle, the more narrow path to maintain an open-bid relaying service to which proposers forward bids. However, the viability of this latter path depends crucially on the nature of competition in the builder market.

 Several limitations of our analysis are worth flagging. We do not model collusion among builders, which is a standard concern for sealed-bid auctions in classical settings but plausibly less binding in blockchain environments where on-chain transparency and programmable coordination weaken the open-vs-sealed distinction as a collusion-suppression tool. We do not model the searcher-builder pipeline upstream of the auction; our values are private at the builder level, which is the relevant abstraction for the auction itself but abstracts from common-value dynamics among searchers. We treat each slot in isolation rather than as part of a repeated game; reputational mechanisms that sustain relay commitment are mentioned but not formally modeled. The analysis also takes the set of available mechanisms and intermediaries as given; the entry and design problem for relays - what mechanism a new relay should propose, given the equilibrium predictions above - is a natural follow-up question. 

 \bibliographystyle{RM.bst}
\bibliography{bibliography}

@article{McAfee1993,
  title={Mechanism design by competing sellers},
  author={McAfee, R Preston},
  journal={Econometrica: Journal of the econometric society},
  pages={1281--1312},
  year={1993},
  publisher={JSTOR}
}

@article{PetersSeverinov1997,
  title={Competition among sellers who offer auctions instead of prices},
  author={Peters, Michael and Severinov, Sergei},
  journal={Journal of Economic Theory},
  volume={75},
  number={1},
  pages={141--179},
  year={1997},
  publisher={Elsevier}
}

@article{RochetTirole2006,
  title={Two-sided markets: a progress report},
  author={Rochet, Jean-Charles and Tirole, Jean},
  journal={The RAND journal of economics},
  volume={37},
  number={3},
  pages={645--667},
  year={2006},
  publisher={Wiley Online Library}
}

@article{Armstrong2006,
  title={Competition in two-sided markets},
  author={Armstrong, Mark},
  journal={The RAND journal of economics},
  volume={37},
  number={3},
  pages={668--691},
  year={2006},
  publisher={Wiley Online Library}
}

@article{CaillaudJullien2003,
  title={Chicken \& egg: Competition among intermediation service providers},
  author={Caillaud, Bernard and Jullien, Bruno},
  journal={RAND journal of Economics},
  pages={309--328},
  year={2003},
  publisher={JSTOR}
}

@book{FoucaultPaganoRoell2023,
  title={Market liquidity: theory, evidence, and policy},
  author={Foucault, Thierry and Pagano, Marco and R{\"o}ell, Ailsa},
  year={2023},
  publisher={Oxford University Press}
}

@article{MilgromWeber1982,
  title={The value of information in a sealed-bid auction},
  author={Milgrom, Paul and Weber, Robert J},
  journal={Journal of Mathematical Economics},
  volume={10},
  number={1},
  pages={105--114},
  year={1982},
  publisher={Elsevier}
}

@article{Oomen2017,
  title={Last look},
  author={Oomen, Roel},
  journal={Quantitative Finance},
  volume={17},
  number={7},
  pages={1057--1070},
  year={2017},
  publisher={Taylor \& Francis}
}

@article{OHaraYe2011,
  title={Is market fragmentation harming market quality?},
  author={O'Hara, Maureen and Ye, Mao},
  journal={Journal of Financial Economics},
  volume={100},
  number={3},
  pages={459--474},
  year={2011},
  publisher={Elsevier}
}

@article{BergemannPesendorfer2007,
  title={Information structures in optimal auctions},
  author={Bergemann, Dirk and Pesendorfer, Martin},
  journal={Journal of economic theory},
  volume={137},
  number={1},
  pages={580--609},
  year={2007},
  publisher={Elsevier}
}

@article{BurguetSakovics1999,
  title={Imperfect competition in auction designs},
  author={Burguet, Roberto and S{\'a}kovics, J{\'o}zsef},
  journal={International Economic Review},
  volume={40},
  number={1},
  pages={231--247},
  year={1999},
  publisher={Wiley Online Library}
}

@article{ellison2004competing,
  title={Competing auctions},
  author={Ellison, Glenn and Fudenberg, Drew and M{\"o}bius, Markus},
  journal={Journal of the European Economic Association},
  volume={2},
  number={1},
  pages={30--66},
  year={2004},
  publisher={Oxford University Press}
}

@inproceedings{chawla2013auctions,
  title={Auctions with unique equilibria},
  author={Chawla, Shuchi and Hartline, Jason D},
  booktitle={Proceedings of the fourteenth ACM conference on Electronic commerce},
  pages={181--196},
  year={2013}
}

@inproceedings{pai2024structural,
  title={Structural advantages for integrated builders in mev-boost},
  author={Pai, Mallesh and Resnick, Max},
  booktitle={International Conference on Financial Cryptography and Data Security},
  pages={128--132},
  year={2024},
  organization={Springer}
}

@misc{eip7732,
  author       = {Francesco D'Amato and Nico Flaig and Barnab{\'e} Monnot and Michael Neuder and Potuz and Justin Traglia and Terence Tsao},
  title        = {{EIP-7732: Enshrined Proposer-Builder Separation}},
  year         = {2024},
  howpublished = {\url{https://eips.ethereum.org/EIPS/eip-7732}},
  note         = {Draft EIP, accessed April 2026}
}

@article{mazorra2025free,
  title={The Free Option Problem of ePBS},
  author={Mazorra, Bruno and {\"O}z, Burak and Schlegel, Christoph and Wu, Fei},
  journal={arXiv preprint arXiv:2509.24849},
  year={2025}
}

\appendix
\section{Proofs}
\subsection{Proof Lemma \ref{lem:fp_bid_monotone_single_homing}}
As it is optimal for bidder $v$ to bid $b(v)$, we have
\begin{equation*}
    (v-b(v))\Pr[\text{winning with }b(v)]\geq (v-b(w))\Pr[\text{winning with }b(w)].
\end{equation*}
Similarly, as it is optimal for bidder $w$ to bid $b(w)$, we have
\begin{equation*}
    (w-b(w))\Pr[\text{winning with }b(w)]\geq (w-b(v))\Pr[\text{winning with }b(v)].
\end{equation*}
Summing these inequalities and rearranging, we get
\begin{equation*}
    (v-w)(\Pr[\text{winning with }b(v)]-\Pr[\text{winning with }b(w)])\geq0.
\end{equation*}
Therefore $\Pr[\text{winning with }b(v)]\geq\Pr[\text{winning with }b(w)]$.
Now suppose towards contradiction that $b(v)<b(w)$. The winning probability is weakly increasing in the bid, therefore
\begin{equation*}
  \Pr[\text{winning with }b(v)]\leq \Pr[\text{winning with }b(w)].
\end{equation*}
Hence, $\Pr[\text{winning with }b(v)]=\Pr[\text{winning with }b(w)]$. Since $v,w>\underline{v}$, this quantity is strictly positive. The optimality inequality for type $w$ then implies $b(v)\geq b(w)$, a contradiction.
\subsection{Proof of Lemma \ref{lemma:truthful_second_coordinate}}
\begin{proof}
Fix an arbitrary profile of opponents' bids, and let
$B=\max_{j\neq i}b_j$ and $W=\max_{j\neq i}w_j$. If $s<v$, then replacing $s$
by $v$ only changes the outcome when $W>\max\{B,x\}$ and $s<W\leq v$. On this
event, the bidder wins the second-price auction at price $W\leq v$, so the
change is weakly profitable. If $s>v$, the replacement only changes the outcome
when $W>\max\{B,x\}$ and $v<W\leq s$. On this event, the bid $s$ makes the
bidder win at price $W>v$, which gives negative payoff, while bidding $v$
avoids this loss. The case $s=v$ is immediate.

Thus, whenever $w(v)>b(v)$, replacing $w(v)$ by $v$ weakly improves type $v$'s
payoff while leaving the first-price bid fixed. Since we work with the
undominated representative of the equilibrium, this gives $w(v)=v$ for almost
every $v\in P$.
\end{proof}
\subsection{Proof Proposition}
\begin{proof}
First compute the payoff from staying in the all-English profile. If all
builders use the relay, the English auction selects the highest value and the
winner pays the second-highest value. Hence
\begin{equation*}
U_E^n(F)
=
\int_{\underline v}^{\bar v}
(1-F(v))F(v)^{n-1}\,dv .
\end{equation*}
Changing variables $u=F(v)$ and then $u=1-x/n$ gives
\begin{equation*}
U_E^n(F)
=
\frac{1}{n^2}
\int_0^n
x\left(1-\frac{x}{n}\right)^{n-1}
\frac{1}{f(F^{-1}(1-x/n))}\,dx .
\end{equation*}
Since $f(F^{-1}(1-x/n))\to f(\overline v)$ and
$\left(1-x/n\right)^{n-1}\to e^{-x}$ for each fixed $x$, a standard
upper-tail splitting argument gives
\begin{equation*}
n^2f(\overline v)\,U_E^n(F)\to \int_0^\infty xe^{-x}\,dx=1 .
\end{equation*}

Now consider a single native-channel deviator against a relay of size $n-1$.
Let $X$ be the highest relay value and $Y$ the second-highest relay value.
The relay representative submits
\begin{equation*}
b_R(X,Y)=\max\{Y,\alpha_n(X)\},
\end{equation*}
where $\alpha_n(X)$ is the representative's voluntary bid.

We first note that the voluntary bid is bounded away from the top of the
support. Pick any $a<\bar v$ with $F(a)>0$, and set
\begin{equation*}
\eta=\frac{(\bar v-a)F(a)}{2(1+F(a))}.
\end{equation*}
Then $\eta>0$ and $(\bar v-\eta-a)F(a)>\eta$. We claim that
$\alpha_n(X)\leq \bar v-\eta$ for every feasible $X$. If
$X\leq \bar v-\eta$, this follows from weakly undominated bidding. If
$X>\bar v-\eta$, bidding $a$ wins whenever the native bidder's value is below
$a$, and so gives payoff at least
$(X-a)F(a)>(\bar v-\eta-a)F(a)>\eta$. By contrast, any voluntary bid
$b>\bar v-\eta$ gives payoff at most $X-b<\eta$. Hence no bid above
$\bar v-\eta$ can be optimal. Thus $\alpha_n(X)\leq \bar v-\eta$ for all
$X$ and all $n$.

It follows that, for every $b\geq \bar v-\eta$,
\begin{equation*}
b_R(X,Y)\leq b
\quad\Longleftrightarrow\quad
Y\leq b .
\end{equation*}
Near the top of the support, relay bids are therefore governed by the English
floor $Y$, not by the voluntary bid $\alpha_n(X)$.

We now give the native deviator a feasible strategy. Write an upper-tail
native type as
\begin{equation*}
z=F^{-1}\left(1-\frac{t}{n}\right).
\end{equation*}
For this type, let the deviator bid
\begin{equation*}
b_n(t)=F^{-1}\left(1-\frac{x(t)}{n}\right),
\end{equation*}
where $x(t)$ is defined by
\begin{equation*}
t=x(t)-1-\frac{1}{x(t)}.
\end{equation*}
Equivalently,
\begin{equation*}
x(t)=\frac{1+t+\sqrt{(1+t)^2+4}}{2}.
\end{equation*}
Since $x(t)>t$, this bid is below the type's value. Moreover, for each fixed
$t$, $b_n(t)\to\bar v$, so for large $n$ the bid lies in the region where the
voluntary relay bid does not bind.

The native bidder therefore wins exactly when $Y\leq b_n(t)$. Since the
relay contains $n-1$ iid values,
\begin{equation*}
\Pr(Y\leq b_n(t))
=
\left(1-\frac{x(t)}{n}\right)^{n-1}
+
(n-1)\frac{x(t)}{n}
\left(1-\frac{x(t)}{n}\right)^{n-2},
\end{equation*}
which converges to $e^{-x(t)}(1+x(t))$. Also,
\begin{equation*}
F^{-1}\left(1-\frac{t}{n}\right)
-
F^{-1}\left(1-\frac{x(t)}{n}\right)
=
\frac{x(t)-t}{nf(\overline v)}
+
o\left(\frac{1}{n}\right).
\end{equation*}

Integrating only over types $t\in[0,T]$, and using $dF=dt/n$, yields
\begin{equation*}
\liminf_{n\to\infty} n^2f(\overline v)\,U_{\mathrm{out}}^n(F)
\geq
\int_0^T
(x(t)-t)e^{-x(t)}(1+x(t))\,dt .
\end{equation*}
Letting $T\to\infty$ and using
\begin{equation*}
t=x-1-\frac{1}{x},
\qquad
x-t=1+\frac{1}{x},
\qquad
dt=\left(1+\frac{1}{x^2}\right)dx,
\end{equation*}
we obtain
\begin{equation*}
\liminf_{n\to\infty} n^2f(\overline v)\,U_{\mathrm{out}}^n(F)
\geq
\rho,
\end{equation*}
where
\begin{equation*}
\rho
=
\int_{\varphi}^{\infty}
e^{-x}(1+x)
\left(1+\frac{1}{x}\right)
\left(1+\frac{1}{x^2}\right)\,dx,
\qquad
\varphi=\frac{1+\sqrt 5}{2}.
\end{equation*}
The constant is strictly larger than one. Indeed, the integrand equals
$e^{-x}(x+2+2/x+2/x^2+1/x^3)$, and hence
\begin{equation*}
\rho
>
e^{-\varphi}(\varphi+3)
+
\frac{13}{8}(e^{-\varphi}-e^{-2})
>
1 .
\end{equation*}
Combining the two estimates,
\begin{equation*}
n^2f(\overline v)\,U_E^n(F)\to 1,
\qquad
\liminf_{n\to\infty} n^2f(\overline v)\,U_{\mathrm{out}}^n(F)\geq \rho>1.
\end{equation*}
Therefore $U_{\mathrm{out}}^n(F)>U_E^n(F)$ for all sufficiently large $n$.
A builder in the all-English profile then has a profitable deviation to the
native first-price channel. Hence the all-English profile is not a
first-stage equilibrium for all sufficiently large $n$.
\end{proof}
\subsection{Proof of Proposition \ref{prop:commitment_show}}
Observe that both auction formats are non-wasteful: for every bid profile, the
block is allocated to some bidder.

For a valuation profile $v$, let $p_i^S(v)$ and $x_i^S(v)$ denote,
respectively, the payment and allocation probability of bidder $i$ in a
Bayes--Nash equilibrium of $\texttt{Show}(S)$. By the revelation principle and
Myerson's payment identity,
\begin{equation*}
\mathbb{E}\!\left[\sum_{i=1}^n p_i^S(v)\right]
=
\mathbb{E}\!\left[\sum_{i=1}^n \phi(v_i)x_i^S(v)\right]
-
\sum_{i=1}^n u_i^S(\underline v),
\end{equation*}
where $u_i^S(\underline v)$ is bidder $i$'s interim utility at the lowest type.
Since every bidder has outside option $0$, we have $u_i^S(\underline v)\geq0$.
Therefore,
\begin{equation*}
\mathrm{Rev}(\texttt{Show}(S))
\leq
\mathbb{E}\!\left[\sum_{i=1}^n \phi(v_i)x_i^S(v)\right].
\end{equation*}

Since $F$ is regular and bidders are identically distributed, the highest-value
bidder also has the highest virtual value. Moreover, $\texttt{Show}(S)$ is
non-wasteful, so $\sum_i x_i^S(v)=1$ for every $v$. Hence, pointwise,
\begin{equation*}
\sum_{i=1}^n \phi(v_i)x_i^S(v)
\leq
\sum_{i=1}^n
\phi(v_i)\mathbf{1}\!\left\{v_i>\max_{j\neq i}v_j\right\},
\end{equation*}
where ties have probability zero because $F$ is atomless. Taking expectations,
\begin{equation*}
\mathrm{Rev}(\texttt{Show}(S))
\leq
\mathbb{E}\!\left[
\sum_{i=1}^n
\phi(v_i)\mathbf{1}\!\left\{v_i>\max_{j\neq i}v_j\right\}
\right].
\end{equation*}
The right-hand side is the expected virtual surplus of the efficient
allocation. The symmetric first-price auction allocates efficiently and gives
the lowest type utility $0$, so by Myerson's payment identity, equivalently by
revenue equivalence,
\begin{equation*}
\mathrm{Rev}(\mathrm{FP})
=
\mathbb{E}\!\left[
\sum_{i=1}^n
\phi(v_i)\mathbf{1}\!\left\{v_i>\max_{j\neq i}v_j\right\}
\right].
\end{equation*}
Combining the previous two displays gives the result.
\subsection{Proof of Proposition~\ref{cor:open_relay_no_leakage}}
\label{app:open_relay_no_leakage}

\begin{proof}
Let $S$ denote the set of low-latency builders and let
$A=[n]\setminus S$ denote the set of high-latency builders. Fix a
high-latency builder $i\in A$ with value $v_i$, and fix a bid $b\leq v_i$.
Compare two ways of submitting this same bid. In the first, builder $i$ sends
$b$ through the open relay. In the second, builder $i$ sends $b$ through the
sealed first-price channel.

By assumption, the proposer treats the bid in the same way in both cases: the
bid is compared against all other bids, and if it wins, builder $i$ pays $b$.
The only difference is informational. If $b$ is sent through the open relay,
the low-latency builders observe it before submitting their final bids. If
$b$ is sent through the sealed channel, they do not.

We first show that the sealed submission weakly dominates the relay
submission. Consider any realization of the other builders' values and bids.
If the open-relay bid $b$ wins, then no low-latency builder can both observe
$b$ and have value strictly above $b$. Indeed, any such builder could bid
slightly above $b$, or bid $b$ when ties favor low-latency builders, and win
with positive surplus. Hence, whenever the open-relay bid wins, the same bid
submitted through the sealed channel also wins: the bid is still submitted to
the proposer, the payment is the same, and hiding the bid only removes the
low-latency builders' ability to condition their final bids on it.

The comparison is strict on the states in which some low-latency builder would
react to the revealed bid. More precisely, suppose that a low-latency builder
has value above $b$, but would not have bid above $b$ without observing this
particular relay bid. If $b$ is sent through the sealed channel, builder $i$
can still win on such a state. If $b$ is sent through the open relay, the
low-latency builder observes $b$ and can profitably outbid it. Thus the relay
submission loses in a state in which the sealed submission may win.

Under the maintained atomless full-support value distribution, and under the
usual first-price shading in the sealed channel, such states occur with
positive probability for almost every type whose bid has a positive chance of
winning. Therefore a high-latency builder strictly prefers, for almost every
such type, to submit through the sealed channel rather than through the open
relay.

It follows that high-latency builders do not send their bids through the open
relay. Once this happens, the relay no longer reveals the bids that
low-latency builders need to beat. Low-latency builders may still observe the
relay, or even submit messages through it, but the relay gives them no
information about the sealed high-latency bids. Thus the last-look advantage
disappears. The remaining payoff-relevant bidding environment is the standard
sealed first-price auction, so the allocation and payments coincide with the
standard first-price equilibrium outcome.
\end{proof}

\end{document}